\documentclass[11pt,reqno]{article}
\usepackage[left=1in,right=1in,top=0.8in,bottom=0.8in,includehead,includefoot]{geometry}
\usepackage{amsmath,amssymb,amsthm}
\usepackage{graphicx}
\usepackage{booktabs,tabularx}
\usepackage{placeins,flafter}
\usepackage{needspace}
\usepackage[round,authoryear]{natbib}
\usepackage{setspace}
\usepackage{soul}
\usepackage[dvipsnames]{xcolor}

\newtheorem{lemma}{Lemma}
\newtheorem{corollary}{Corollary}
\newtheorem{proposition}{Proposition}
\newtheorem{assumption}{Assumption}

\newcommand{\Vs}{\scriptscriptstyle V}
\newcommand{\E}{\mathbb E}
\newcommand{\PP}{\mathbb P}
\newcommand{\RR}{\mathbb R}
\newcommand{\diff}{\mathrm d}
\newcommand{\Var}{\operatorname{Var}}
\newcommand{\Cov}{\operatorname{Cov}}
\newcommand{\Corr}{\operatorname{Corr}}
\newcommand{\sgn}{\operatorname{sgn}}
\newcommand{\eps}{\varepsilon}
\newcommand{\sigeps}{\sigma_{\eps}}

\renewcommand{\leq}{\leqslant}
\renewcommand{\geq}{\geqslant}

\usepackage{hyperref}
\hypersetup{
  colorlinks=true,
  linkcolor=blue,
  citecolor=Brown,
  pdftitle={Decision Noise in Threshold Selection},
  pdfauthor={Ye Lu and Joeri Smits}
}

\begin{document}

\title{\vspace{-1.25cm}Decision Noise in Threshold Selection}
\author{Ye Lu\thanks{School of Economics, University of Sydney. Email: ye.lu1@sydney.edu.au}
\and
Joeri Smits\thanks{Department of Economics, Tufts University. Email: joeri.smits@tufts.edu}}
\date{October 2026}
\maketitle

\setstretch{1.1}
\setlength{\parskip}{0.8ex plus0.3ex minus0.3ex}

\begin{abstract}
  Can noisier threshold selection produce a group with higher mean true value? We
  answer this question by varying the scale of a fixed assessment error. Holding the
  selection rate fixed, less noise yields a first-order stochastically dominant
  distribution of selected values. At a fixed threshold, however, more noise can
  raise the selected mean even as selection loss, which weights selection mistakes by
  the distance of true value from the threshold, increases. Under joint Gaussianity,
  we characterize when the selected mean rises or falls with noise. In treatment
  selection, noise alone can generate subgroup differences in participants’ average
  gains despite identical population gain distributions.
\end{abstract}

\noindent\textit{Keywords:}
decision noise; threshold selection; screening; self-selection; selection loss;
stochastic dominance.

\noindent\textit{JEL:} C21, D81, D82.

\section{Introduction}

We study threshold selection rules based on noisy assessments. These rules arise in
applications such as \emph{screening} and \emph{self-selection}.
An evaluator may admit applicants
whose scores exceed a cutoff, while an individual may invest when perceived gains
exceed a cost. We distinguish two settings. Under a \emph{fixed selection rate}, the
threshold adjusts to select a prescribed fraction; under a \emph{fixed threshold},
the selection rate adjusts. We vary noise by proportionally scaling a baseline
assessment error while holding its joint distribution with true value fixed.

We show that, at a fixed selection rate, the true values of units selected with less
decision noise first-order stochastically dominate those of units selected with more
noise. This holds for any joint distribution of true value and baseline
assessment error. At a fixed threshold, the loss in threshold-adjusted aggregate
value relative to noise-free selection equals the population average of selection
mistakes weighted by the distance between true value and the threshold. This loss is
nondecreasing in noise, yet the selected mean can rise as the selected group’s size
and composition change.

When true value and baseline assessment error are jointly
Gaussian, we derive exact conditions under which a small increase in noise raises or
lowers the selected mean, allowing any admissible correlation between them. For
nonnegative correlation, as noise increases, the selected mean either decreases
throughout, first increases and then decreases, or increases throughout. Under
independence, an interior peak exists exactly when the noise-free selection rate
exceeds a critical rate of approximately 79.95 percent; the selection rate equals
that critical rate at the peak.
Negative correlation permits a local minimum followed by a local maximum. In treatment
selection, differences in decision noise can generate subgroup differences in the
average treatment effect on the treated (ATT) despite identical population gain
distributions.

The closest study is \citet{lagziel2022}, whose Lemma 2 shows that scaling up
noise independent of true value does not increase the selected mean at a fixed
selection rate. We extend their stated mean comparison to first-order stochastic
dominance of the distribution of selected true values, allowing arbitrary dependence
between true value and baseline assessment error.
Their examples of beneficial noise add an independent noise component, changing the
error distribution’s shape. Our main contribution instead characterizes when scaling
a fixed assessment error raises the selected mean at a fixed threshold, even as
selection loss increases. Related studies examine whether stricter cutoffs improve
selected values \citep{lagziel2019, heinsalu2020}, or study threshold decisions when
evaluators differ in diagnostic skill and preferences \citep{chan2022selection}.

Our treatment application connects to self-selection based on individual gains
\citep{roy1951, heckman2006} and to models of choices under uncertain returns
\citep{carneiro2003, cunha2005, dhaultfoeuille2013}. \cite{harris2021} studies the
estimation of perceived investment returns, while \cite{hult2019} examines treatment
allocation using noisy signals of individual treatment effects.
The Gaussian formula for the selected mean follows standard sample-selection analysis
\citep{heckman1979}. We use this framework to isolate how decision noise changes
participant composition and mean gains while holding the population gain distribution
fixed.

\section{Threshold selection under decision noise}\label{sec:model}

Let $V$ denote true value and $\eps$ a baseline assessment error, allowing arbitrary
dependence between them. At noise scale $\alpha\geq0$, the decision-relevant assessment is
$V+\alpha\eps$, and selection follows
\begin{equation}\label{eq:model}
  D_\alpha = 1\{V+\alpha\eps>\eta(\alpha)\}.
\end{equation}
Here, $\eta(\alpha)$ is the threshold, $\alpha\eps$ is \emph{decision noise}, and
$D_\alpha=1$ indicates selection. The notation permits the threshold to adjust with
$\alpha$; its dependence on $\alpha$ is determined by the selection regime below.
We vary $\alpha$ while holding the joint distribution of $(V,\eps)$ fixed.

Let $p(\alpha)=\PP(D_\alpha=1)$ denote the \emph{selection rate}, and assume
$\E|V|<\infty$. When $p(\alpha)>0$, the \emph{selected mean} is
\begin{equation}\label{eq:m-def}
  m(\alpha):=\E(V\mid D_\alpha=1).
\end{equation}
The \emph{aggregate value}, measured per population unit, is
\begin{equation}\label{eq:A-def}
  A(\alpha):=\E(VD_\alpha).
\end{equation}
Thus $A(\alpha)=p(\alpha)m(\alpha)$ whenever $p(\alpha)>0$.

We consider two settings. In the \emph{fixed selection rate} setting, the threshold
$\eta(\alpha)$ adjusts to maintain a prescribed selection rate $p(\alpha) =
p\in(0,1)$.%
\footnote{
  If $V+\alpha\eps$ has an atom at $\eta(\alpha)$, randomized tie-breaking may be
  required:
  \(
    D_\alpha
    =
    1\{V+\alpha\eps>\eta(\alpha)\}
    +
    1\{V+\alpha\eps=\eta(\alpha)\}1\{U\leq\gamma_\alpha\},
  \)
  where $U\sim U[0,1]$ is independent of $(V,\eps)$ and $\gamma_\alpha\in[0,1]$ is
  chosen so that $\PP(D_\alpha=1)=p$; see, for example, \cite{lagziel2022}.
}
In the \emph{fixed threshold} setting, $\eta(\alpha)=\eta$ and the selection rate
$p(\alpha)$ varies with noise. 
In screening, the two settings correspond to an evaluator selecting under a binding
capacity constraint or applying a fixed eligibility standard, respectively.
Self-selection with a fixed participation cost also fits the fixed threshold
setting.

\section{Selection at a fixed rate}\label{sec:fr}

At a fixed selection rate, changes in noise affect the composition of the selected
group. The following proposition shows that reducing noise shifts the distribution of
selected true values upward in first-order stochastic dominance.

\begin{proposition}\label{prop:fosd}
  Let $(V,\eps)$ have an arbitrary fixed joint distribution. At a fixed selection
  rate $p\in(0,1)$, for any $0\leq\alpha_1<\alpha_2$,
  \begin{equation}\label{eq:general-fosd}
    V\mid D_{\alpha_1}=1 
    \text{ first-order stochastically dominates }
    V\mid D_{\alpha_2}=1.
  \end{equation}
  If, in addition, $\E|V|<\infty$, then $m(\alpha_1) \geq m(\alpha_2)$.
\end{proposition}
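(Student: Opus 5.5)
The plan is to compare the two selected groups directly. Since both selection rules select the same mass $p$, first-order stochastic dominance in \eqref{eq:general-fosd} is equivalent to
\[
  \PP(V>t,\,D_{\alpha_1}=1)\;\geq\;\PP(V>t,\,D_{\alpha_2}=1)\qquad\text{for all } t\in\RR .
\]
Dividing both sides by $p$ gives the statement about conditional survival functions.

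\textbf{Step 1: swap decomposition.} Write $G=\{D_{\alpha_1}=1,\,D_{\alpha_2}=0\}$ for the units \emph{gained} when moving from $\alpha_2$ to $\alpha_1$. Write $L=\{D_{\alpha_1}=0,\,D_{\alpha_2}=1\}$ for the units \emph{lost}. Then
\[
  \PP(V>t,D_{\alpha_1}=1)-\PP(V>t,D_{\alpha_2}=1)=\PP(G\cap\{V>t\})-\PP(L\cap\{V>t\}).
\]
Because $\PP(D_{\alpha_1}=1)=\PP(D_{\alpha_2}=1)=p$, we also have $\PP(G)=\PP(L)$.

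\textbf{Step 2: separating cutoff.} This is the key step, and it is where arbitrary dependence between $V$ and $\eps$ is handled. Put $S_\alpha=V+\alpha\eps$ and $w=\alpha_1/\alpha_2\in[0,1)$. Then $S_{\alpha_1}$ is a convex combination of the noisier score and the true value:
\[
  S_{\alpha_1}=w\,S_{\alpha_2}+(1-w)\,V .
\]
On $G$ we have $S_{\alpha_1}\geq\eta(\alpha_1)$ and $S_{\alpha_2}\leq\eta(\alpha_2)$. The weak inequalities allow for the randomized tie-breaking of the footnote. Hence
\[
  (1-w)V\geq\eta(\alpha_1)-w\,\eta(\alpha_2),
\]
that is, $V\geq c:=\bigl(\eta(\alpha_1)-w\eta(\alpha_2)\bigr)/(1-w)$. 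On $L$ the reverse inequalities give $V\leq c$. So every gained unit has true value at least $c$, and every lost unit has true value at most $c$. The case $\alpha_1=0$ is included, with $w=0$ and $c=\eta(0)$. No distributional assumption on $(V,\eps)$ is used.

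\textbf{Step 3: conclude.} Consider two cases for $t$.
\begin{itemize}
  \item If $t\geq c$, then $L\cap\{V>t\}$ is empty. The difference in Step 1 therefore equals $\PP(G\cap\{V>t\})\geq0$.
  \item If $t<c$, then $G\subseteq\{V>t\}$. The difference equals $\PP(G)-\PP(L\cap\{V>t\})\geq\PP(G)-\PP(L)=0$.
\end{itemize}
This proves \eqref{eq:general-fosd}.

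\textbf{Means.} When $\E|V|<\infty$, the conditional means exist because $p>0$. First-order stochastic dominance implies an ordering of expectations, for instance via $\E X=\int_0^\infty \PP(X>t)\,\diff t-\int_{-\infty}^0 \PP(X\leq t)\,\diff t$. This gives $m(\alpha_1)\geq m(\alpha_2)$.

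The only delicate point I expect is bookkeeping at ties: on atoms of $S_\alpha$ at the threshold, the inequalities in Step 2 must be taken as weak. Because Step 3 only needs $V\geq c$ on $G$ and $V\leq c$ on $L$, units with $V=c$ cause no difficulty.
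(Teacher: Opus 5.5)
Your proof is correct, and your cutoff $c=(\eta(\alpha_1)-w\,\eta(\alpha_2))/(1-w)$ is exactly the paper's crossing value $v_c=(\alpha_2\eta(\alpha_1)-\alpha_1\eta(\alpha_2))/(\alpha_2-\alpha_1)$. What differs is how the cutoff is used.

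The paper works with conditional selection probabilities $\pi_j(v)=\PP(D_{\alpha_j}=1\mid V=v)$, constructed as Radon--Nikodym derivatives. It proves a general crossing lemma (Lemma~\ref{lem:crossing}): equal selection rates together with $\pi_1\leq\pi_2$ below $v_c$ and $\pi_1\geq\pi_2$ above $v_c$ imply dominance. It then verifies single crossing by sandwiching conditional tail probabilities of $\eps$ given $V=v$, and treats $\alpha_1=0$ as a separate case.

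You argue realization by realization instead. The identity $S_{\alpha_1}=wS_{\alpha_2}+(1-w)V$ shows that every unit gained in moving to less noise has $V\geq c$ and every unit lost has $V\leq c$. This pathwise separation implies the paper's single-crossing condition, since it gives $\PP(G\mid V=v)=0$ for $v<c$ and $\PP(L\mid V=v)=0$ for $v>c$. It needs no conditioning on $V=v$, no almost-everywhere qualifiers, and no separate treatment of $\alpha_1=0$. Any coupling of the tie-breaking draws works, because you use only the weak inequalities $S_{\alpha_1}\geq\eta(\alpha_1)$ on $\{D_{\alpha_1}=1\}$ and $S_{\alpha_2}\leq\eta(\alpha_2)$ on $\{D_{\alpha_2}=0\}$.

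Your separation also gives the mean comparison directly, without passing through dominance: $p\,[m(\alpha_1)-m(\alpha_2)]=\E(V1_G)-\E(V1_L)\geq c\,[\PP(G)-\PP(L)]=0$. What the paper's formulation buys is a reusable lemma phrased purely in terms of conditional choice probabilities. The paper also uses that lemma to obtain strict inequalities under condition~\eqref{eq:scc}.
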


The mechanism parallels the argument in the proof of Lemma~2 of \citet{lagziel2022}.
There is a crossing value such that reducing noise lowers selection probabilities for
units with true values below it and raises them for units with true values above it.
With the overall selection rate fixed, selection shifts toward higher true values.
For an evaluator filling a fixed number of places, reducing noise therefore cannot
lower the share of selected units whose true values exceed any given quality
standard.

Under $\E|V|<\infty$, aggregate value $A(\alpha)=p\,m(\alpha)$ is also nonincreasing
in noise. The selected-mean and aggregate-value inequalities are strict if
conditional selection probabilities differ on a set of true values with positive
probability; see Lemma~S.1 in the Supplementary Material.

\section{Selection at a fixed threshold}\label{sec:ft}

Now fix $\eta(\alpha)=\eta$ and allow the selection rate to vary. More noise can then
raise the selected mean.

\subsection{Threshold-adjusted value and loss from decision noise}\label{sec:ft-loss}

For $\E|V-\eta|<\infty$, define the \emph{threshold-adjusted aggregate value}
\begin{equation} \label{eq:W-def}
  W(\alpha) := \E \big[(V-\eta)D_\alpha \big].
\end{equation}
Thus $W(\alpha)=A(\alpha)-\eta p(\alpha)$. If $\eta$ is a selection or participation
cost excluded from $V$, $W(\alpha)$ is population-average net surplus; for an
eligibility or quality standard, it is a performance measure. Since the
noise-free rule $D_0=1\{V>\eta\}$ maximizes $\E[(V-\eta)D]$ over binary selection
rules $D$, we measure \emph{selection loss} by $W(0)-W(\alpha)$. When
$\eta$ is a common treatment cost, $W(0)-W(\alpha)$ is the welfare regret of the
noisy rule relative to the noise-free rule $D_0$, analogous to the regret criterion
in the treatment-choice literature \citep{manski2004statistical,kitagawa2018should}.

\begin{proposition} \label{prop:ftv}
  Let $(V,\eps)$ have any fixed joint distribution and suppose
  $\E|V-\eta|<\infty$. Then, for every $\alpha\geq0$,
  \begin{equation}\label{eq:threshold-loss}
    W(0)-W(\alpha)
    =
    \E\big[|V-\eta|1\{D_\alpha\neq D_0\}\big] \geq0.
  \end{equation}
  Moreover, for $0\leq\alpha_1<\alpha_2$,
  \begin{equation}\label{eq:threshold-loss-2}
    W(\alpha_1) - W(\alpha_2)
    =
    \E\big[
      |V-\eta|\, 1\{D_{\alpha_1}=D_0,\, D_{\alpha_2}\neq D_0\}
    \big]
    \geq0.
  \end{equation}
\end{proposition}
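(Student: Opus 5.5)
The first identity follows from a pointwise computation. Write $W(0)-W(\alpha)=\E[(V-\eta)(D_0-D_\alpha)]$; this is legitimate because $\E|V-\eta|<\infty$ and $|D_0-D_\alpha|\leq1$, so each expectation is finite. Since $D_0$ and $D_\alpha$ are binary, $D_0-D_\alpha\in\{-1,0,1\}$, and it is nonzero exactly on $\{D_\alpha\neq D_0\}$. We then check signs case by case:
\begin{itemize}
  \item if $D_0=1,D_\alpha=0$, then $V>\eta$, so $(V-\eta)(D_0-D_\alpha)=V-\eta=|V-\eta|$;
  \item if $D_0=0,D_\alpha=1$, then $V\leq\eta$, so $(V-\eta)(D_0-D_\alpha)=\eta-V=|V-\eta|$.
\end{itemize}
Hence $(V-\eta)(D_0-D_\alpha)=|V-\eta|1\{D_\alpha\neq D_0\}$ pointwise. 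Taking expectations gives \eqref{eq:threshold-loss} together with nonnegativity. This step does not use the fixed-threshold structure at all; only the form $D_0=1\{V>\eta\}$ matters.

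For \eqref{eq:threshold-loss-2}, subtract two instances of \eqref{eq:threshold-loss}. This gives
\[
W(\alpha_1)-W(\alpha_2)=\E\big[|V-\eta|\big(1\{D_{\alpha_2}\neq D_0\}-1\{D_{\alpha_1}\neq D_0\}\big)\big].
\]
The key claim, and the main obstacle, is monotonicity of mistakes: with $\eta$ fixed, $D_{\alpha_1}\neq D_0$ implies $D_{\alpha_2}\neq D_0$, up to events where $|V-\eta|=0$. Once this holds, the bracket equals $1\{D_{\alpha_1}=D_0,D_{\alpha_2}\neq D_0\}$ wherever $V\neq\eta$, which gives the stated formula and its nonnegativity.

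To prove the claim, fix a realization with $V\neq\eta$ and consider the map $t\mapsto V+t\eps-\eta$. It is affine in $t$ and its value at $t=0$ is $V-\eta\neq0$.
\begin{itemize}
  \item Suppose $V>\eta$ and $D_{\alpha_1}=0$, i.e.\ $V+\alpha_1\eps\leq\eta$. Then $\alpha_1>0$ and $\eps<0$, so $V+\alpha_2\eps<V+\alpha_1\eps\leq\eta$, giving $D_{\alpha_2}=0$.
  \item Suppose $V<\eta$ and $D_{\alpha_1}=1$, i.e.\ $V+\alpha_1\eps>\eta$. Then $\eps>0$, so $V+\alpha_2\eps>\eta$, giving $D_{\alpha_2}=1$.
\end{itemize}
In words, an affine path starting on one side of the threshold crosses it at most once. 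Once a unit is misclassified, it stays misclassified as $\alpha$ grows.

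The remaining care is the event $\{V=\eta\}$. There, ties in $V+\alpha\eps=\eta$ could break monotonicity, but the weight $|V-\eta|$ vanishes, so these points do not affect either side of the identity. This is the only place I expect a subtlety.
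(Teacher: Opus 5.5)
Your proposal is correct and takes essentially the paper's route: the first identity is the same case-by-case pointwise computation, and your claim that mistakes persist is the contrapositive of the paper's observation that $(D_{\alpha_1},D_{\alpha_2})=(1,0)$ forces $\eps<0$ and $V>\eta$, while $(0,1)$ forces $\eps>0$ and $V\leq\eta$, so every switch goes from correct to mistaken. Your caveat about $\{V=\eta\}$ is harmless but unnecessary: there $D_\alpha=1\{\eps>0\}$ for every $\alpha>0$, so the monotonicity holds pointwise everywhere, as in the paper.
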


Each mistaken inclusion or exclusion contributes $|V-\eta|$ to selection loss,
so mistakes near the threshold incur little loss. When $\eta$ is a cost, these
mistakes forgo positive net gains or incur negative ones. Scaling up the same
realized error can create new selection mistakes but cannot correct existing ones.
Thus selection loss is nondecreasing in noise, and $W(\alpha)$ is nonincreasing.
Aggregate value can nevertheless rise if an increase in $\eta p(\alpha)$ offsets
the decline in $W(\alpha)$.

The selected mean can also rise despite increasing selection loss, because
\begin{equation}\label{eq:W-att}
  m(\alpha)-\eta
  =
  \frac{W(\alpha)}{p(\alpha)},
  \qquad p(\alpha) > 0.
\end{equation}
For example, excluding units whose values lie just above $\eta$ but below the
previous selected mean raises that mean at little loss. Accounting for both entry
and exit, for
$0\leq\alpha_1<\alpha_2$ with $p(\alpha_i)>0$ for $i=1,2$,
\eqref{eq:W-att} gives $m(\alpha_2)>m(\alpha_1)$ if and only if
\begin{equation}\label{eq:m21}
  \big[m(\alpha_1) - \eta \big] [p(\alpha_1)-p(\alpha_2)] 
  > W(\alpha_1)-W(\alpha_2).
\end{equation}
When $W(\alpha_1)>0$, this condition requires a larger proportional decline in the
selection rate than in threshold-adjusted value. These comparisons require no
independence or distributional restrictions.

\subsection{Gaussian selected means}
\label{sec:ft-gaussian}

We now specialize to joint normality to characterize how the selected
mean responds to increasing noise.

\begin{assumption}\label{ass:normal}
  The random vector $(V,\eps)$ follows a bivariate normal distribution with mean
  $(\mu,0)$ and correlation $\rho\in(-1,1)$. The standard deviations of $V$
  and $\eps$ are $\sigma>0$ and $\sigeps>0$, respectively.
\end{assumption}

Under this assumption, $W(\alpha)$ is strictly decreasing for every
$\rho\in(-1,1)$, whereas $m(\alpha)$ need not be.
Let $r_\alpha := \alpha\sigma_\eps/\sigma$ denote \emph{relative noise scale}, the
ratio of the standard deviations of $\alpha\eps$ and $V$. For $\alpha\geq0$, define
\begin{equation}\label{eq:para}
  s(\alpha)
  := \text{Std}(V+\alpha\eps)
  =  \sigma\sqrt{1+2\rho r_\alpha + r_\alpha^2} > 0,
  \qquad
  q(\alpha)
  := \Corr(V,V+\alpha\eps)
  =  \frac{\sigma(1+\rho r_\alpha)}{s(\alpha)}.
\end{equation}
The assessment standard deviation $s(\alpha)$ increases with $\alpha$ whenever
$\rho>-r_\alpha$. For $\alpha>0$, the \emph{value--assessment correlation} $q(\alpha)$ strictly
decreases as $\alpha$ increases, since
\(
  q'(\alpha)
  =
  -s(\alpha)^{-3}\alpha\sigma\sigeps^2(1-\rho^2)<0.
\)
For $\alpha>0$, $q(\alpha) > 0$ if and only if $\rho>-1/r_\alpha$.

Let $\phi$ and $\Phi$ denote the standard normal density and cdf, respectively, and
define the lower-tail inverse Mills ratio $\lambda(z)=\phi(z)/\Phi(z)$ for
$z\in\RR$.

By Proposition~S.1, for $\alpha\geq0$, the selection rate and selected mean are
\begin{align}
  p(\alpha)
  &=
  \Phi\left(\frac{\mu-\eta}{s(\alpha)}\right),
  \label{eq:p-alp}\\
  m(\alpha)
  &=
  \mu+\sigma q(\alpha) (\lambda\circ\Phi^{-1})(p(\alpha)).
  \label{eq:gsm}
\end{align}
Thus $p(0) = \Phi((\mu-\eta)/\sigma)$ and $p(\alpha) \to 1/2$ as $\alpha\to\infty$.
Differentiating \eqref{eq:p-alp} gives
\begin{equation}\label{eq:p'}
  \sgn(p'(\alpha)) = -\sgn\big((\rho+r_\alpha)(p(0)-1/2)\big).
\end{equation}

Equation~\eqref{eq:gsm} separates two channels through which noise affects the
selected mean. Holding the selection rate fixed, a decline in $q(\alpha)$ lowers the
selected mean. Holding $q(\alpha)>0$ fixed, a decline in $p(\alpha)$ raises the
selected mean because $\lambda\circ\Phi^{-1}$ is strictly decreasing. The following
proposition gives the exact condition for the selected mean to rise with noise.
To state the result, define
\[
  \psi(z):=z [z+\lambda(z)], \qquad z\in\RR.
\]
Lemma S.2 shows that $\psi$ is strictly increasing from $\RR$ onto $(-1,\infty)$.

\begin{proposition}\label{prop:gcs}
  Suppose Assumption~\ref{ass:normal} holds and $\eta(\alpha)=\eta\in\RR$ is fixed.
  For every $\rho\in(-1,1)$ and $\alpha>0$,
  \begin{equation}\label{eq:gms}
    \sgn\bigl(m'(\alpha)\bigr)
    =\sgn\!\left[
      (1+\rho r_\alpha)(\rho+r_\alpha)\,
      \psi\!\left(\frac{\mu-\eta}{s(\alpha)}\right)
      -r_\alpha(1-\rho^2)
    \right].
  \end{equation}
  If, in addition, $\rho>-\min\{r_\alpha,1/r_\alpha\}$, then $m'(\alpha)>0$ if and only if 
  \begin{equation}\label{eq:grc}
    p(\alpha)
    >
    (\Phi\circ \psi^{-1})
    \left(\frac{r_\alpha(1-\rho^2)}{(1+\rho r_\alpha)(r_\alpha+\rho)}\right).
  \end{equation}
\end{proposition}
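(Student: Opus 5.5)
The plan is to differentiate the closed form \eqref{eq:gsm} directly in terms of the standardized threshold $z(\alpha):=(\mu-\eta)/s(\alpha)$. By \eqref{eq:p-alp}, $\Phi^{-1}(p(\alpha))=z(\alpha)$, so \eqref{eq:gsm} reads $m(\alpha)=\mu+\sigma q(\alpha)\lambda(z(\alpha))$. Everything here is smooth in $\alpha>0$ because $s(\alpha)>0$. The product rule gives
\[
m'(\alpha)=\sigma\bigl[q'(\alpha)\lambda(z)+q(\alpha)\lambda'(z)z'(\alpha)\bigr].
\]
I would use two elementary identities. The first, for the lower-tail inverse Mills ratio, is $\lambda'(z)=-\lambda(z)\,[z+\lambda(z)]$, which follows from $\phi'(z)=-z\phi(z)$. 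The second is $z'(\alpha)=-z(\alpha)\,s'(\alpha)/s(\alpha)$. Substituting both yields the key factorization
\[
m'(\alpha)=\sigma\lambda(z)\Bigl[q'(\alpha)+q(\alpha)\,\psi(z)\,\frac{s'(\alpha)}{s(\alpha)}\Bigr],
\]
which brings out $\psi(z)=z[z+\lambda(z)]$ naturally.

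Next I would compute the two ingredients in terms of $r_\alpha$, using $\diff r_\alpha/\diff\alpha=\sigeps/\sigma$ and $\alpha\sigeps=r_\alpha\sigma$. From \eqref{eq:para},
\[
\frac{s'(\alpha)}{s(\alpha)}=\frac{\sigeps}{\sigma}\cdot\frac{\rho+r_\alpha}{1+2\rho r_\alpha+r_\alpha^2}=\frac{\sigeps}{\sigma}\cdot\frac{\sigma^2(\rho+r_\alpha)}{s(\alpha)^2}.
\]
The stated derivative of $q$ can be rewritten as
\[
q'(\alpha)=-\frac{\sigeps}{\sigma}\,\frac{\sigma^3}{s(\alpha)^3}\,r_\alpha(1-\rho^2).
\]
Using $q(\alpha)=\sigma(1+\rho r_\alpha)/s(\alpha)$, the common positive factor $\sigma\lambda(z)\,(\sigeps/\sigma)\,\sigma^3/s(\alpha)^3$ can then be pulled out. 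What remains is exactly
\[
(1+\rho r_\alpha)(\rho+r_\alpha)\,\psi(z)-r_\alpha(1-\rho^2).
\]
Since $\lambda>0$, this gives \eqref{eq:gms}.

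For the second claim, suppose $\rho>-\min\{r_\alpha,1/r_\alpha\}$. Then $\rho+r_\alpha>0$ and $1+\rho r_\alpha>0$, so dividing by the positive product shows that $m'(\alpha)>0$ if and only if $\psi(z)>c$, where
\[
c:=\frac{r_\alpha(1-\rho^2)}{(1+\rho r_\alpha)(r_\alpha+\rho)}.
\]
This $c$ is strictly positive, hence lies in $(-1,\infty)$. By Lemma~S.2, $\psi$ is a strictly increasing bijection of $\RR$ onto $(-1,\infty)$, so $\psi(z)>c$ if and only if $z>\psi^{-1}(c)$. Applying the strictly increasing $\Phi$ and using $p(\alpha)=\Phi(z)$ gives \eqref{eq:grc}.

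The main obstacle is purely bookkeeping. Concretely, I must verify that the scale factors in $q'$ and $q\,s'/s$ match exactly, both being $(\sigeps/\sigma)\sigma^3/s^3$, so that no stray positive or negative factor survives. I would double-check this by re-deriving $q'$ from the quotient rule rather than relying only on the displayed formula.
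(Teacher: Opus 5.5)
Your proposal is correct and follows essentially the same route as the paper. You differentiate $m(\alpha)=\mu+\sigma q(\alpha)\lambda(\tau_\alpha)$ and use $\lambda'(z)=-\lambda(z)[z+\lambda(z)]$ together with $\diff\tau_\alpha/\diff\alpha=-\tau_\alpha s'(\alpha)/s(\alpha)$ to pull out the positive factor $s(\alpha)^{-3}\sigma^3\sigeps\lambda(\tau_\alpha)$; your scale-factor bookkeeping checks out exactly. You then invert $\psi$ and $\Phi$ using the positivity of $(1+\rho r_\alpha)(r_\alpha+\rho)$, as the paper does.
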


The next proposition gives the complete global classification for $\rho\geq0$,
using only the noise-free rate $p(0)$ and the value--error correlation.
We say $m(\alpha)$ is \emph{single-peaked} if it strictly
increases to a unique maximum at $\alpha^*>0$ and strictly decreases thereafter.
Define
\begin{equation}\label{eq:cutoff}
  p^\dagger:=(\Phi\circ\psi^{-1})(1),
  \qquad
  \bar p_\rho:=\Phi\!\left(\frac{1-\rho^2}{\rho\sqrt{2/\pi}}\right),
  \quad 0<\rho<1.
\end{equation}
In particular, $p^\dagger\approx 0.7995$ and $\bar p_\rho>1/2$ for $0<\rho<1$.

\begin{figure}[!t]
  \centering
  \includegraphics[width=\textwidth]{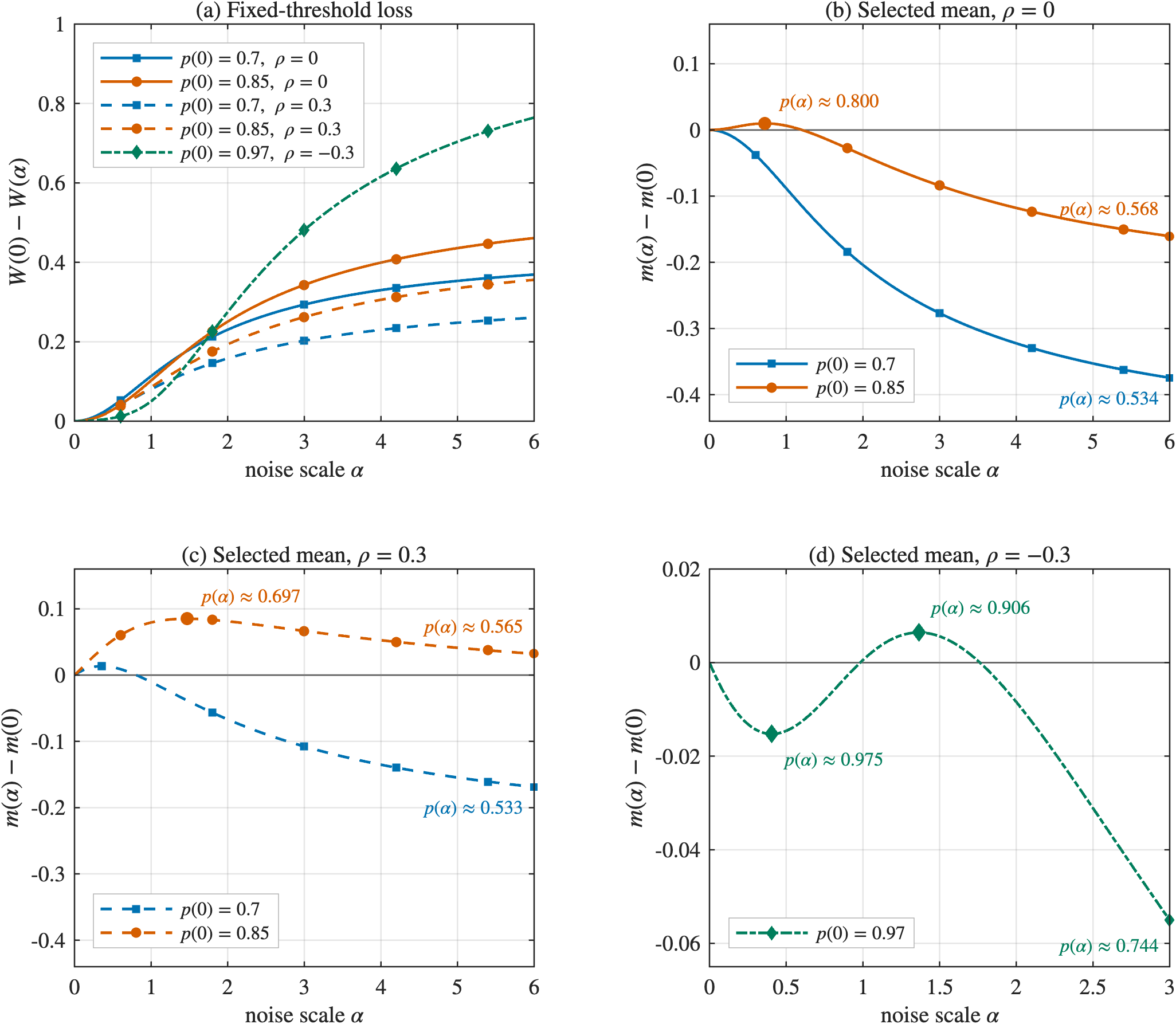}
  \caption{Fixed-threshold Gaussian selection, with $\mu=0$ and $\sigma=\sigeps=1$.
  (a) Loss for the paths in (b)--(d). (b)--(d) Selected-mean changes under zero,
  positive, and negative correlation. Legends give the noise-free selection rate
  $p(0)$; annotations give $p(\alpha)$ at marked extrema and endpoints.
  In (d), the selected mean initially falls, then recovers above $m(0)$ before
  falling below it again.}
  \label{fig:results}
\end{figure}

\begin{proposition}\label{prop:shape}
  Suppose Assumption~\ref{ass:normal} holds with $\rho\geq0$, and
  $\eta(\alpha)=\eta\in\RR$. As $\alpha$ ranges over $[0,\infty)$:
  \begin{enumerate}
    \item[(i)] 
      If $\rho=0$, $m(\alpha)$ is strictly decreasing when $p(0)\leq p^\dagger$
      and single-peaked when $p(0)>p^\dagger$.
    \item[(ii)] 
      If $\rho>0$, $m(\alpha)$ is strictly decreasing when $p(0)\leq1/2$,
      single-peaked when $1/2<p(0)<\bar p_\rho$, and strictly increasing when
      $p(0)\geq\bar p_\rho$.
  \end{enumerate}
\end{proposition}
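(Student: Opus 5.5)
The plan is to work entirely from the sign formula \eqref{eq:gms} of Proposition~\ref{prop:gcs}. We reparametrize by $r=r_\alpha$, which increases strictly from $0$ to $\infty$ as $\alpha$ does. Write $z_0=(\mu-\eta)/\sigma$, so that $p(0)=\Phi(z_0)$, and
\[
  z(r)=\frac{\mu-\eta}{s(\alpha)}=\frac{z_0}{\sqrt{1+2\rho r+r^2}} .
\]
For $\rho\geq0$, $z(r)$ moves monotonically toward $0$, and strictly so if $z_0\neq0$. Two facts will be used repeatedly: $\psi(0)=0$, and $\psi$ is strictly increasing (Lemma~S.2). Hence $\psi(z(r))$ has the sign of $z_0$.

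\emph{Case $\rho=0$.} Here $G(r):=(1+\rho r)(\rho+r)\psi(z)-r(1-\rho^2)$ reduces to $r[\psi(z(r))-1]$. Therefore $m'(\alpha)>0$ if and only if $z(r)>\psi^{-1}(1)$, which is equivalent to $p(\alpha)>p^\dagger$.

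By \eqref{eq:p'}, $p(\alpha)$ is monotone in $\alpha$ and tends to $1/2<p^\dagger$.
\begin{itemize}
\item If $p(0)\leq p^\dagger$, then either $p(0)>1/2$ and $p$ strictly decreases, or $p(0)\leq 1/2$ and $p$ stays at most $1/2$. In both cases $p(\alpha)<p^\dagger$ for all $\alpha>0$, so $m$ is strictly decreasing.
\item If $p(0)>p^\dagger$, then $p$ decreases strictly and continuously to $1/2$. It therefore crosses $p^\dagger$ exactly once, at some $\alpha^*$, with $m'>0$ before $\alpha^*$ and $m'<0$ after. So $m$ is single-peaked.
\end{itemize}

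\emph{Case $\rho>0$ and $p(0)\leq1/2$.} Then $z(r)\leq0$, so $\psi(z(r))\leq0$ and $G(r)\leq -r(1-\rho^2)<0$ for every $r>0$. Hence $m$ is strictly decreasing.

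\emph{Case $\rho>0$ and $p(0)>1/2$.} Now $z_0>0$, and $z(r)$ decreases strictly to $0$. Factor $\psi(z)=z\,g(z)$ with $g(z)=z+\lambda(z)$.

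The function $g$ is strictly increasing, because
\[
  g'(z)=1-\lambda(z)[z+\lambda(z)]>0 .
\]
This is the positivity of the variance of an upper-truncated standard normal; it can alternatively be extracted from Lemma~S.2. Also $g(0)=\lambda(0)=\sqrt{2/\pi}$.

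Dividing $G$ by the positive quantity $(1+\rho r)(\rho+r)z(r)/z_0$ shows that $\sgn m'(\alpha)=\sgn[L(r)-K(r)]$, where
\[
  L(r)=z_0\,g(z(r)),\qquad
  K(r)=\frac{r(1-\rho^2)\sqrt{1+2\rho r+r^2}}{(1+\rho r)(\rho+r)} .
\]
The function $L$ is strictly decreasing from $z_0g(z_0)>0$ to $z_0\sqrt{2/\pi}$. Also $K(0)=0$ and $K(r)\to(1-\rho^2)/\rho$ as $r\to\infty$.

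The key step, which I expect to be the main obstacle, is to show that $K$ is strictly increasing on $(0,\infty)$. My first attempt would be to verify that its logarithmic derivative
\[
  \frac1r+\frac{\rho+r}{1+2\rho r+r^2}-\frac{\rho}{1+\rho r}-\frac{1}{\rho+r}
\]
is positive. I would do this by clearing denominators and checking that the resulting polynomial in $r$ has nonnegative coefficients for $\rho\in(0,1)$. As a sanity check, at $\rho=0$ one gets $K=\sqrt{1+r^2}$, which is increasing.

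Granting that $K$ is strictly increasing, $L-K$ is strictly decreasing, positive at $r=0$, and tends to $z_0\sqrt{2/\pi}-(1-\rho^2)/\rho$.
\begin{itemize}
\item If $p(0)<\bar p_\rho$, that limit is negative, so $L-K$ has exactly one zero. This gives a single peak.
\item If $p(0)\geq\bar p_\rho$, then for every finite $r$,
\[
  L(r)>\lim_{r'\to\infty} L(r')=z_0\sqrt{2/\pi}\geq\frac{1-\rho^2}{\rho}>K(r),
\]
using strict monotonicity of both $L$ and $K$. So $m'>0$ throughout, and $m$ is strictly increasing. This includes the boundary case $p(0)=\bar p_\rho$.
\end{itemize}

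Finally, continuity of $m$ at $\alpha=0$ (from \eqref{eq:gsm}) transfers strict monotonicity on $(0,\infty)$ to $[0,\infty)$.
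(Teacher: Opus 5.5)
Your proof is correct, and it reorganizes the paper's argument in a genuinely different way.

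The paper inverts $\psi$. It writes $\sgn m'(\alpha)=\sgn\bigl(\tau_0-H_\rho(r_\alpha)\bigr)$, with $H_\rho(r)=d_\rho(r)\,\psi^{-1}\bigl(K(r,\rho)\bigr)$ and $K(r,\rho)=r(1-\rho^2)/[(1+\rho r)(r+\rho)]$, and then proves that $H_\rho$ is strictly increasing. For $0<\rho<1$, $K(\cdot,\rho)$ rises on $(0,1)$ and falls on $(1,\infty)$. The paper's proof therefore needs a case split at $r=1$, the elasticity bound $z\psi'(z)/\psi(z)>1$ for $z>0$, a polynomial positivity check, and the asymptotic $\psi^{-1}(x)\sim x/\sqrt{2/\pi}$ as $x\downarrow0$ to obtain $\lim_{r\to\infty}H_\rho(r)$.

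You instead keep $z_0$ inside, factor $\psi=zg$, and compare the decreasing function $L(r)=z_0g(z(r))$ with your increasing function, which is exactly $d_\rho(r)K(r,\rho)$. The ingredients coincide with the paper's. Your $g'>0$ is the bound $\lambda'>-1$ in \eqref{eq:mv}, which also drives the paper's elasticity bound. Clearing denominators in your log-derivative yields precisely the paper's polynomial $r^3+\rho(2+\rho^2)r^2+3\rho^2r+\rho$.

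The step you flagged as the main obstacle is in fact immediate. Your log-derivative equals $\rho/[r(\rho+r)]+r(1-\rho^2)/[(1+2\rho r+r^2)(1+\rho r)]$, a sum of two positive terms.

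What each route buys:
\begin{itemize}
\item Your arrangement removes the case split and any use of $\psi^{-1}$ in part~(ii). The endpoint limits follow directly from $g(0)=\lambda(0)=\sqrt{2/\pi}$.
\item The paper's arrangement produces a single increasing boundary $\Phi(H_\rho(r_\alpha))$ against which $p(0)$ is compared. This gives the explicit peak location $\alpha_\rho^*=(\sigma/\sigeps)H_\rho^{-1}(\Phi^{-1}(p(0)))$, and the boundary is reused later to show $\alpha_\rho^*\to0$ as $\rho\downarrow0$.
\end{itemize}

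For $\rho=0$, your crossing argument along the monotone path of $p(\alpha)$ is equivalent to the paper's comparison of $\tau_0$ with $H_0(r)=\sqrt{1+r^2}\,\psi^{-1}(1)$. Your continuity step at $\alpha=0$ makes explicit a detail the paper leaves implicit.
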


In Figure~\ref{fig:results}, panel (a) shows loss increasing along every path,
while panels (b)--(d) show that the selected mean can exceed its noise-free level.
Under independence, panel (b) contrasts the decreasing path at $p(0)=0.7$ with
the single-peaked path at $p(0)=0.85$. In both paths, the threshold lies below
mean true value, so more units leave selection than enter it as noise increases.
For any $p(0)>p^\dagger$, selection contracts
throughout, and the maximum occurs when $p(\alpha)=p^\dagger\approx0.7995$.
Before this maximum, $p(\alpha)$ falls proportionately faster than $W(\alpha)$,
allowing the selected mean to rise despite increasing loss.

With positive correlation, even the path starting at $p(0)=0.7$ initially rises
(panel (c)). The selection rate at the maximum now depends on $p(0)$ and $\rho$;
it is approximately $0.697$ for the illustrated $p(0)=0.85$ path.
For fixed $1/2<p(0)\leq p^\dagger$, the maximum gain above $m(0)$ vanishes as
$\rho\downarrow0$, consistent with the decreasing path under independence.

For $\rho<0$, Proposition~\ref{prop:gcs} implies that the selected mean initially
rises when $p(0)<1/2$ and falls when $p(0)>1/2$; it decreases throughout when
$p(0)=1/2$. At low selection rates, negative correlation initially lowers the
assessment standard deviation and contracts selection. Since $q'(0)=0$, this
contraction accounts for the initial rise in the selected mean.

An initial decline can also be reversed, as panel (d) illustrates with
$\rho=-0.3$ and $p(0)=0.97$. The selected mean falls to a local minimum at
$p(\alpha)\approx0.975$, then recovers above $m(0)$ and reaches a local maximum
at $p(\alpha)\approx0.906$, before declining again. This sequence has two turning
points, whereas Proposition~\ref{prop:shape} allows at most one.
The Supplementary Material gives proofs and further local selection-rate
conditions for $\rho<0$.

\section{Subgroup ATT in treatment selection}\label{sec:subgroup-att}

For treatment selection, let $Y(1)$ and $Y(0)$ denote potential outcomes under
treatment and no treatment, respectively, and set $V=Y(1)-Y(0)$ as the treatment
gain. Then, $m(\alpha)$ is the ATT, $p(\alpha)$ is the treatment rate, and
$A(\alpha)=\E[(Y(1)-Y(0))D_\alpha]=p(\alpha)\mathrm{ATT}(\alpha)$ is the average
treatment gain generated by the selection rule, relative to universal
nonparticipation. Suppose two subgroups, indexed by $g\in\{1,2\}$, share the same
joint distribution of $(V,\eps)$ but have error scales $0\leq\alpha_1<\alpha_2$.
Their ATTs can then differ through selection alone, despite identical treatment-gain
distributions and hence identical average treatment effects (ATEs) in the two
subgroups.

If the same treatment rate is imposed within each subgroup,
Proposition~\ref{prop:fosd} gives a weakly lower ATT in group 2.
Equal subgroup treatment rates therefore do not eliminate differences in the
composition of participants. Fixing only the overall population treatment rate
does not yield this ordering.

At a common fixed participation cost $\eta$, excluded from treatment gains, group 2
has weakly lower net surplus per group member by Proposition~\ref{prop:ftv}.
Its ATT can nevertheless be higher: under independent Gaussian assessment errors with
$p(0)>p^\dagger$, this occurs when both noise scales lie before the maximum of
the selected mean (Figure~\ref{fig:results}(b)). In this case, reducing noise
expands treatment and raises net surplus even though ATT falls. Using ATT alone
to assess a change in decision procedures can therefore favor a rule with lower net surplus.
An observed ATT gap alone does not identify its source: it may reflect differences in
underlying gains, decision noise, or thresholds.

\section{Conclusion}

Reducing noise cannot lower the selected mean at a fixed selection rate, but can
lower it at a fixed threshold despite reducing selection loss. In treatment
selection, noise can generate subgroup ATT differences despite identical population
gain distributions. These comparisons scale a fixed assessment error while preserving
its joint distribution with true value. A higher selected mean therefore need not
indicate lower selection loss.

\begingroup
\setstretch{1.1}
\setlength{\bibsep}{0em}
\bibliographystyle{apalike}
\bibliography{references}
\endgroup

% Supplementary material: counters restart with distinct hyperlink destinations.
\clearpage
\setcounter{section}{0}
\setcounter{equation}{0}
\setcounter{figure}{0}
\setcounter{table}{0}
\setcounter{lemma}{0}
\setcounter{corollary}{0}
\setcounter{proposition}{0}
\renewcommand{\theequation}{S.\arabic{equation}}
\renewcommand{\thefigure}{S.\arabic{figure}}
\renewcommand{\thetable}{S.\arabic{table}}
\renewcommand{\thelemma}{S.\arabic{lemma}}
\renewcommand{\thecorollary}{S.\arabic{corollary}}
\renewcommand{\theproposition}{S.\arabic{proposition}}
\renewcommand{\theHsection}{supp.\arabic{section}}
\renewcommand{\theHequation}{supp.\arabic{equation}}
\renewcommand{\theHfigure}{supp.\arabic{figure}}
\renewcommand{\theHtable}{supp.\arabic{table}}
\renewcommand{\theHlemma}{supp.\arabic{lemma}}
\renewcommand{\theHcorollary}{supp.\arabic{corollary}}
\renewcommand{\theHproposition}{supp.\arabic{proposition}}

\begin{center}
  {\Large\bfseries Supplementary Material\par}
  \medskip
  {\large\itshape Decision Noise in Threshold Selection\par}
\end{center}
\medskip

In this supplementary material, Section~\ref{sec:s-distribution-free} proves the
distribution-free comparisons in Propositions~\ref{prop:fosd} and~\ref{prop:ftv} of
the main paper. Section~\ref{sec:s-gaussian} collects the Gaussian selection
formulas, proves Propositions~\ref{prop:gcs} and~\ref{prop:shape}, and studies how
the cutoffs and the shape of $m(\alpha)$ vary with correlation. The local response is
characterized for every correlation; the complete global shape classification applies
to nonnegative correlation, with additional possibilities established under negative
correlation. Section~\ref{sec:s-subgroup-att} applies the results to subgroup ATT
comparisons in treatment selection.

We use the main paper's notation $m(\alpha)=\E(V\mid D_\alpha=1)$ for the
selected mean, defined when $\E|V|<\infty$ and $p(\alpha)>0$. The selection
rule, and hence $m$, depends on whether the rate or threshold is held fixed.

\section{Distribution-free comparisons}\label{sec:s-distribution-free}

\subsection{Selection at a fixed rate}

Let $D_1$ and $D_2$ be Bernoulli random variables jointly defined with a real-valued
random variable $V$ whose probability measure on $\RR$ is denoted by $F_{\Vs}$.
For $j=1,2$, define the finite measure $\mu_j$ by 
\(
  \mu_j(A) := \PP(V\in A, D_j=1)
\)
for any Borel set $A\subseteq\RR$. Since $0\leq \mu_j(A) \leq \PP(V\in A) =
F_{\Vs}(A)$, $\mu_j$ is absolutely continuous with respect to $F_{\Vs}$, and there exists
a measurable function $\pi_j:\RR\to[0,1]$, unique $F_{\Vs}$-almost everywhere, such that
\(
  \mu_j(A) = \int_A \pi_j(v)F_{\Vs}(\diff v)
\)
for every Borel set $A\subseteq\RR$. The defining property of \(\pi_j\) implies that
\(
  \E[1\{V\in A\}\pi_j(V)] = \E[1\{V\in A\}D_j]
\)
for every Borel set $A\subseteq\RR$. Hence, $\pi_j(V)=\E(D_j\mid V)$, $\PP$-almost surely.
In particular, $\pi_j$ can be interpreted as the conditional choice probability, that
is, $\pi_j(v) = \PP(D_j=1 \mid V = v)$ for $F_{\Vs}$-almost every $v\in\RR$.

Given these definitions, we have the following lemma, in which ``a.e.'' stands for
``almost every''.

\begin{lemma}\label{lem:crossing}
  Suppose that $\PP(D_1=1)=\PP(D_2=1)=p\in(0,1)$ and that, for some $v_c\in\mathbb
  R$, 
  $\pi_1(v)\leq\pi_2(v)$ for $F_{\Vs}$-a.e. $v\in(-\infty,v_c)$, and
  $\pi_1(v)\geq\pi_2(v)$ for $F_{\Vs}$-a.e. $v\in(v_c, \infty)$.
  Then
  \begin{equation} \label{eq:cdf-ieq}
    \PP(V\leq v\mid D_1=1)
    \leq
    \PP(V\leq v\mid D_2=1)
    \qquad\text{for every }v\in\mathbb R.
  \end{equation}
  Equivalently, $V\mid D_1=1$ first-order stochastically dominates
  $V\mid D_2=1$. If
  \begin{equation}\label{eq:scc}
   \PP\{\pi_1(V)\neq\pi_2(V)\}>0
  \end{equation}
  holds, the inequality in \eqref{eq:cdf-ieq} is strict for some $v\in\RR$.
  If \eqref{eq:scc} holds and, in addition, $\E|V|<\infty$, then
  \begin{equation} \label{eq:E-ieq}
    \E(V\mid D_1=1)>\E(V\mid D_2=1).
  \end{equation}
\end{lemma}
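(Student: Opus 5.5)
The plan is to work with the difference of sub-probability measures $\delta:=\mu_1-\mu_2$, which has density $g(v):=\pi_1(v)-\pi_2(v)$ with respect to $F_{\Vs}$. Since $\PP(D_1=1)=\PP(D_2=1)=p$, we have $\int g\,\diff F_{\Vs}=0$. By hypothesis, $g\leq0$ a.e.\ on $(-\infty,v_c)$ and $g\geq0$ a.e.\ on $(v_c,\infty)$. The cdf comparison \eqref{eq:cdf-ieq}, multiplied by $p>0$, is equivalent to
\[
  G(v):=\int_{(-\infty,v]} g\,\diff F_{\Vs}\leq 0 \qquad\text{for every } v\in\RR .
\]
I will prove this in two cases.

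For $v<v_c$, the integrand is a.e.\ nonpositive on $(-\infty,v]$, so $G(v)\leq0$ directly. For $v\geq v_c$, I use the zero total mass to write $G(v)=-\int_{(v,\infty)}g\,\diff F_{\Vs}$. The integrand is a.e.\ nonnegative on $(v,\infty)\subseteq(v_c,\infty)$, so again $G(v)\leq0$.

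The atom at $v_c$ is not controlled by the hypotheses, and this is the one subtle point. It causes no trouble for the inequality: for $v\geq v_c$ I only integrate over $(v,\infty)$, which excludes $v_c$, and for $v<v_c$ the point $v_c$ is not in the range of integration. Equivalence with first-order stochastic dominance is then just the definition.

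For strictness under \eqref{eq:scc}, I argue by contradiction and suppose $G\equiv0$. Then the signed measure $\delta$ has zero distribution function. Since half-lines generate the Borel sets, a $\pi$--$\lambda$ argument (applied to the finite measures $\mu_1,\mu_2$) gives $\delta=0$. Hence $g=0$ $F_{\Vs}$-a.e., that is, $\PP\{\pi_1(V)\neq\pi_2(V)\}=0$, which contradicts \eqref{eq:scc}. So $G(v)<0$ for some $v$.

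For the means, I use the layer-cake identity
\[
  \E(V\mid D_j=1)=\int_0^\infty \PP(V>t\mid D_j=1)\,\diff t-\int_{-\infty}^0\PP(V\leq t\mid D_j=1)\,\diff t .
\]
Both conditional means are finite, because $\E(|V|D_j)\leq\E|V|<\infty$ and $p>0$. Subtracting the two identities gives
\[
  \E(V\mid D_1=1)-\E(V\mid D_2=1)=-p^{-1}\int_{\RR}G(t)\,\diff t\geq0 .
\]
For strict inequality, I need $G<0$ on a set of positive Lebesgue measure. Since $G$ is right-continuous, $G(v_0)<0$ at one point $v_0$ implies $G<0$ on some interval $[v_0,v_0+h)$. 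The integral is therefore strictly positive, which gives \eqref{eq:E-ieq}.

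I expect the main obstacle to be making the strictness step rigorous, namely the passage from "the distribution function of $\delta$ vanishes identically" to "$\delta=0$". This is where uniqueness of measures from their cdfs (via the $\pi$--$\lambda$ theorem) must be invoked carefully for a signed measure.
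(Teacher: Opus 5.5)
Your proof is correct. Your argument for the weak cdf inequality is exactly the paper's: you split at $v_c$ and use the zero total mass of $\pi_1-\pi_2$ to rewrite $G(v)$ as a tail integral over $(v,\infty)$ when $v\geq v_c$.

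The two strictness steps follow a genuinely different route.

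\textbf{The paper's route.} It argues constructively. It first shows that $A=\{v\neq v_c:\Delta(v)\neq0\}$ has positive $F_{\Vs}$-measure: an atom at $v_c$ alone cannot carry a nonzero $\Delta$, because $\int\Delta\,\diff F_{\Vs}=0$. It then locates a strict cdf inequality, at $v_c$ itself if $F_{\Vs}(A\cap(v_c,\infty))>0$, and at some $v_n<v_c$ otherwise. For the means it writes $\E(V\mid D_1=1)-\E(V\mid D_2=1)=p^{-1}\int(v-v_c)\Delta(v)\,F_{\Vs}(\diff v)$. The integrand is nonnegative and strictly positive on $A$.

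\textbf{Your route.} You get strict cdf inequality from uniqueness of finite measures that agree on the half-lines. This works because the half-lines form a $\pi$-system generating the Borel sets and $\mu_1(\RR)=\mu_2(\RR)=p$. Note that you use the crossing hypothesis only to obtain $G\leq0$. You then derive the mean inequality from the cdf ordering itself, via the layer-cake identity $-p^{-1}\int_{\RR}G(t)\,\diff t$ and right-continuity of $G$.

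\textbf{What each buys.} Yours is shorter and avoids the atom at $v_c$ entirely. It is also modular: the mean step holds for any two integrable conditional laws with $G\leq0$ and $G\not\equiv0$. The paper's is more elementary, needing neither the $\pi$--$\lambda$ theorem nor the layer-cake formula. It also pinpoints where the strict cdf inequality occurs, and its $(v-v_c)$ representation makes the role of the crossing point in the mean comparison explicit.
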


\begin{proof}[Proof of Lemma~\ref{lem:crossing}]
  For any $v\in\mathbb R$, define $G_j(v)=\PP(V\leq v\mid D_j=1)$ for $j=1,2$. To
  show \eqref{eq:cdf-ieq} it suffices to show that $G_1(v)-G_2(v) \leq 0$ for every
  $v\in\RR$. Using $\PP(D_j=1)=p\in(0,1)$ and applying the law of iterated expectations,
  \begin{align*}
    G_j(v) 
    =
    \frac1p\PP(V\leq v, D_j=1)
    =
    \frac1p \E(1\{V\leq v\} D_j) 
    &=
    \frac1p \E \big[ 1\{V\leq v\} \E( D_j\mid V) \big]
    \\
    &=
    \frac1p\int_{(-\infty,\,v]} \pi_j(x) F_{\Vs}(\diff x).
  \end{align*}
  Thus
  \begin{equation}\label{eq:cdf-device}
    G_1(v)-G_2(v)
    =
    \frac{1}{p}\int_{(-\infty,\,v]} [\pi_1(x) - \pi_2(x)]\,F_{\Vs}(\diff x),
    \qquad 
    v\in\RR.
  \end{equation}
  Since $\pi_1(x)-\pi_2(x)\leq0$ for $F_{\Vs}$-a.e. $x\in(-\infty,v_c)$, we have
  $G_1(v)-G_2(v)\leq0$ for every $v<v_c$.
  Moreover, by \eqref{eq:cdf-device} and 
  \begin{equation}\label{eq:zr}
    \int_{\RR} [\pi_1(x)-\pi_2(x)] F_{\Vs}(\diff x)
    = 
    \E[\pi_1(V)] - \E[\pi_2(V)]
    = 
    \E(D_1) - \E(D_2) = p-p=0,
  \end{equation}
  we have
  \begin{equation}\label{eq:cdf-device-2}
    G_1(v) - G_2(v) 
    =
    - \frac1p\int_{(v,\infty)} [\pi_1(x)-\pi_2(x)] F_{\Vs}(\diff x),
    \qquad
    v\in\RR.
  \end{equation}
  Since $\pi_1(x)-\pi_2(x)\geq0$ for $F_{\Vs}$-a.e. $x\in(v_c,\infty)$, it then follows that
  $G_1(v)-G_2(v)\leq0$ for every $v\geq v_c$. Hence \eqref{eq:cdf-ieq} holds.

  For the strict inequality, let
  \(
    \Delta(v):=\pi_1(v)-\pi_2(v)
  \)
  and define
  \(
    A:=\{v\in\RR:v\neq v_c,\ \Delta(v)\neq0\}.
  \)
  We first show that \eqref{eq:scc} implies $F_{\Vs}(A)>0$. Suppose, to the
  contrary, that $F_{\Vs}(A)=0$. Then $\Delta(v)=0$ for $F_{\Vs}$-a.e. $v\in(-\infty,
  v_c)\cup (v_c,\infty)$, and hence \eqref{eq:zr} gives
  \(
    0
    =
    \int_{\RR}\Delta(v)F_{\Vs}(\diff v)
    =
    \Delta(v_c)F_{\Vs}(\{v_c\}).
  \)
  Moreover, $F_{\Vs}(A) = 0$ also implies that
  \[
    \PP\{\Delta(V)\neq0\}
    = 
    F_{\Vs}(\{v\in\RR: \Delta(v)\neq0\})
    =
    1\{\Delta(v_c)\neq0\}F_{\Vs}(\{v_c\})
    =
    \begin{cases}
      F_{\Vs}(\{v_c\}), & \text{if } \Delta(v_c) \neq 0,\\
      0, & \text{if } \Delta(v_c) = 0.
    \end{cases}
  \]
  Since $\Delta(v_c)F_{\Vs}(\{v_c\})=0$ implies that $F_{\Vs}(\{v_c\}) = 0$ if
  $\Delta(v_c)\neq0$, it follows that $\PP\{\Delta(V)\neq0\}=0$, which contradicts
  \eqref{eq:scc}. Therefore, $F_{\Vs}(A)>0$.

  Now define 
  \begin{align*}
    A_- &:=A\cap(-\infty,v_c) = \{v\in(-\infty,v_c):\Delta(v)\neq 0\},
    \\
    A_+ &:=A\cap(v_c,\infty) = \{v\in(v_c,\infty):\Delta(v)\neq 0\}
  \end{align*}
  Since $F_{\Vs}(A)>0$, at least one of $F_{\Vs}(A_-)$ and $F_{\Vs}(A_+)$ is
  positive. If $F_{\Vs}(A_+)>0$, then we note
  \[
    \int_{(v_c,\infty)}\Delta(x)F_{\Vs}(\diff x) 
    =
    \int_{A_+}\Delta(x)F_{\Vs}(\diff x) 
    >0.
  \]
  where the inequality holds because $\Delta(v)>0$ for $F_{\Vs}$-a.e. $v\in A_+$ and
  $F_{\Vs}(A_+)>0$. Then, by \eqref{eq:cdf-device-2}, we have $G_1(v_c)-G_2(v_c)<0$.
  If instead $F_{\Vs}(A_-)>0$, take any sequence $v_n\uparrow v_c$ with
  $v_n<v_c$. Since
  \(
    A_-=
    \bigcup_{n=1}^{\infty}
    \bigl(A_-\cap(-\infty,v_n]\bigr),
  \)
  there exists some $n$ such that
  \(
    F_{\Vs}\bigl(A_-\cap(-\infty,v_n]\bigr)>0.
  \)
  Because $\Delta(v)<0$ for $F_{\Vs}$-a.e. $v\in A_-$, we then have
  \[
    \int_{(-\infty,v_n]}\Delta(x)F_{\Vs}(\diff x) 
    = 
    \int_{A_-\cap(-\infty,v_n]} \Delta(x) F_{\Vs}(\diff x)
    <0.
  \]
  By \eqref{eq:cdf-device}, it follows that
  \(
    G_1(v_n)-G_2(v_n)<0.
  \)
  Therefore, whenever \eqref{eq:scc} holds, the inequality in \eqref{eq:cdf-ieq} is
  strict for some $v\in\RR$.

  Lastly, suppose that \eqref{eq:scc} holds and $\E|V|<\infty$. Then,
  \(
    \E(|V|\mid D_j=1) = p^{-1} \E(|V|D_j) \leq p^{-1}\E|V| <\infty,
  \)
  for $j=1,2$. Together with \eqref{eq:zr}, this gives
  \[
    \E(V\mid D_1=1)-\E(V\mid D_2=1)
    =
    \frac{1}{p}
    \int_{\RR}(v-v_c)\Delta(v)F_{\Vs}(\diff v).
  \]
  By the assumptions in the lemma, $(v-v_c)\Delta(v)\geq0$ for $F_{\Vs}$-a.s.
  $v\in\RR$, and the inequality is strict for $F_{\Vs}$-a.e. $v\in A$.
  Since, as shown above, \eqref{eq:scc} implies $F_{\Vs}(A)>0$, it follows
  that
  \[
    \int_{\RR}(v-v_c)\Delta(v)F_{\Vs}(\diff v)
    \geq
    \int_A (v-v_c)\Delta(v)F_{\Vs}(\diff v)
    >0.
  \]
  Therefore, $\E(V\mid D_1=1)-\E(V\mid D_2=1) > 0$, which proves \eqref{eq:E-ieq},
  and hence the whole lemma.
\end{proof}

\medskip

Using Lemma~\ref{lem:crossing}, we prove Proposition~\ref{prop:fosd} in the paper.
In the fixed-rate setting, to allow the noisy assessment $V+\alpha\eps$ to have an
atom at the adjustable threshold $\eta(\alpha)$, we specify the following randomized
tie-breaking selection rule (see Footnote~1 in the paper). For $\alpha \geq0$,
\begin{equation}\label{eq:Dj}
  D_{\alpha}
  :=
  1\{V+\alpha\eps > \eta(\alpha)\} 
  +
  1\{V+\alpha\eps = \eta(\alpha)\}
  1\{U \leq \gamma_{\alpha}\},
  \qquad 
  \PP(D_\alpha=1) = p \in (0,1), 
\end{equation}
where $U$ is uniform on $(0,1)$ and independent of $(V,\eps)$. If
$\PP(V+\alpha\eps=\eta(\alpha))>0$, the target rate $p = \PP(D_\alpha=1)$ pins down
\begin{equation}\label{eq:gamma}
  \gamma_\alpha 
  =
  \frac{p-\PP(V+\alpha\eps>\eta(\alpha))}
  {\PP(V+\alpha\eps=\eta(\alpha))}.
\end{equation}

\begin{proof}[Proof of Proposition~\ref{prop:fosd}]
  Fix $0\leq\alpha_1<\alpha_2$. For each $j=1,2$, $D_{\alpha_j}$ is defined by
  \eqref{eq:Dj} and \eqref{eq:gamma}. To prove the first-order stochastic dominance
  in the proposition, we must show
  \begin{equation}\label{eq:s-fosd}
    \PP(V\leq v\mid D_{\alpha_1}=1)
    \leq 
    \PP(V\leq v\mid D_{\alpha_2}=1)
    \qquad\text{for every }v\in\mathbb R.
  \end{equation}

  Let $F_{\Vs}$ denote the probability measure of $V$ on $\RR$.
  For $j=1,2$, let $\pi_j:\RR\to[0,1]$ be the measurable function, unique
  $F_{\Vs}$-a.e., such that 
  \[
    \PP(V\in A, D_{\alpha_j}=1) = \int_A \pi_j(v)F_{\Vs}(\diff v)
    \quad
    \text{for every Borel set } A\subseteq\RR.
  \]
  By Lemma~\ref{lem:crossing}, to show \eqref{eq:s-fosd} it suffices to show
  that there exists some $v_c\in \RR$ such that
  \begin{equation}\label{eq:cr}
    \begin{aligned}
      \pi_1(v) &\leq \pi_2(v)
      &&\quad\text{for $F_{\Vs}$-a.e. } v < v_c,
      \\
      \pi_1(v) &\geq \pi_2(v)
      &&\quad\text{for $F_{\Vs}$-a.e. } v > v_c.
    \end{aligned}
  \end{equation}

  First, consider $\alpha_1=0<\alpha_2$. In this case, 
  \(
    D_{\alpha_1} = 1\{V>\eta(0)\} + 1\{V=\eta(0)\} 1\{U\leq \gamma_0\} 
  \)
  for some $\eta(0)$ and $\gamma_0$ such that $\PP(D_{\alpha_1}=1) = p$.
  By the definition of $\pi_1$, we have
  \[
    \pi_1(v) = 
    \begin{cases}
      0 & \text{for $F_{\Vs}$-a.e. } v<\eta(0) \\
      1 & \text{for $F_{\Vs}$-a.e. } v>\eta(0).
    \end{cases}
  \]
  Since $\pi_2(v)\in[0,1]$ for $v\in\RR$, it follows immediately that
  $\pi_2(v)\geq \pi_1(v)$ for $F_{\Vs}$-a.e. $v<\eta(0)$ and 
  $\pi_2(v)\leq \pi_1(v)$ for $F_{\Vs}$-a.e. $v>\eta(0)$. Setting $v_c =
  \eta(0)$ then proves \eqref{eq:cr} and hence \eqref{eq:s-fosd}.
  
  Next, consider $0<\alpha_1<\alpha_2$. For $j=1,2$, it follows from the definition
  of $D_{\alpha_j}$ in \eqref{eq:Dj} and the definition of $\pi_j$ that 
  \begin{equation}\label{eq:piv}
   \pi_j(v)
   =
   \PP \left\{\left. \eps> \frac{\eta(\alpha_j)-v}{\alpha_j} \right| V=v \right\}
   +
   \gamma_{\alpha_j} 
   \PP \left\{\left. \eps= \frac{\eta(\alpha_j)-v}{\alpha_j} \right| V=v \right\},
   \quad 
   F_{\Vs}\mbox{-a.e. } v.
  \end{equation}
  Let
  \(
    v_c
    =
    (\alpha_2\eta(\alpha_1) - \alpha_1\eta(\alpha_2))/(\alpha_2-\alpha_1).
  \)
  It can be easily shown that 
  \(
    (\eta(\alpha_1)-v)/\alpha_1 > (\eta(\alpha_2)-v)/\alpha_2 
  \)
  if $v < v_c$ and 
  \(
    (\eta(\alpha_1)-v)/\alpha_1 < (\eta(\alpha_2)-v)/\alpha_2 
  \)
  if $v > v_c$. 
  By \eqref{eq:piv} and $\gamma_{\alpha_j} \in [0,1]$, we can show that, for
  $F_{\Vs}$-a.e. $v<v_c$,
  \[
    \pi_1(v)
    \leq
    \PP \left\{\left. \eps \geq \frac{\eta(\alpha_1)-v}{\alpha_1} \right| V=v \right\}
    \leq
    \PP \left\{\left. \eps > \frac{\eta(\alpha_2)-v}{\alpha_2} \right| V=v \right\}
    \leq
    \pi_2(v),
  \]
  while for $F_{\Vs}$-a.e. $v>v_c$,
  \[
    \pi_1(v)
    \geq
    \PP \left\{\left. \eps > \frac{\eta(\alpha_1)-v}{\alpha_1} \right| V=v \right\}
    \geq
    \PP \left\{\left. \eps \geq \frac{\eta(\alpha_2)-v}{\alpha_2} \right| V=v \right\}
    \geq
    \pi_2(v).
  \]
  This shows \eqref{eq:cr} and hence \eqref{eq:s-fosd}.
  
  Lastly, if $\E|V|<\infty$, then $\E(|V|\mid D_\alpha=1)\leq\E|V|/p<\infty$ for
  every $\alpha$. The first-order stochastic-dominance ordering implies
  $m(\alpha_1)\geq m(\alpha_2)$.
  This proves the conditional mean statement in the proposition.
\end{proof}

\subsection{Selection at a fixed threshold}

Under the fixed threshold, $\eta(\alpha) = \eta$, and the threshold selection rule is
\[
  D_\alpha=1\{V+\alpha\eps>\eta\}, \qquad \alpha\geq0.
\]
The noise-free rule is $D_0=1\{V>\eta\}$.

\begin{proof}[Proof of Proposition~\ref{prop:ftv}]
  Since $|(V-\eta)D_\alpha|\leq |V-\eta|$, the assumption $\E|V-\eta|<\infty$ ensures
  that all expectations below are finite. To prove \eqref{eq:threshold-loss}, note
  that $W(0)-W(\alpha) = \E[(V-\eta)(D_0-D_\alpha)]$, where
  \[
    (V-\eta)(D_0-D_\alpha)
    =
    \begin{cases}
      V-\eta, & D_0=1, D_\alpha =0;
      \\
      \eta-V, & D_0=0, D_\alpha =1;
      \\
      0, & D_0 = D_\alpha.
    \end{cases}
  \]
  Since $V-\eta = |V-\eta|$ if $D_0=1$ and $\eta-V = |V-\eta|$ if $D_0=0$, we have 
  \[
    (V-\eta)(D_0-D_\alpha) = |V-\eta| 1\{D_\alpha\neq D_0\},
    \qquad
    \alpha \geq 0.
  \]
  Taking expectations yields \eqref{eq:threshold-loss}.

  Next, we prove \eqref{eq:threshold-loss-2}. Fix $0\leq \alpha_1 < \alpha_2$.  We
  have $W(\alpha_1)-W(\alpha_2) = \E[(V-\eta)(D_{\alpha_1}-D_{\alpha_2})]$. Since
  \[
    D_{\alpha_1} = 
    \begin{cases}
      1, & V+\alpha_1\eps > \eta,
      \\
      0, & V+\alpha_1\eps \leq \eta,
    \end{cases}
    \qquad
    D_{\alpha_2} = 
    \begin{cases}
      1, & V+\alpha_2\eps > \eta,
      \\
      0, & V+\alpha_2\eps \leq \eta.
    \end{cases}
  \]
  Hence
  \[
    D_{\alpha_1} - D_{\alpha_2}
    =
    \begin{cases}
      1, & -\alpha_1\eps < V-\eta \leq -\alpha_2\eps 
      \ \text{ or, equivalently, }\ (D_{\alpha_1}, D_{\alpha_2}) = (1,0),
      \\
      -1, & \alpha_1\eps\leq \eta-V < \alpha_2\eps
      \ \text{ or, equivalently, }\ (D_{\alpha_1}, D_{\alpha_2}) = (0,1),
      \\
      0, & \text{otherwise}
    \end{cases}
  \]
  Since $0\leq\alpha_1<\alpha_2$, we have
  \begin{align*}
    \{(D_{\alpha_1},D_{\alpha_2}) = (1,0)\}
    =
    \{-\alpha_1\eps < V-\eta \leq -\alpha_2\eps\}
    &=
    \{-\alpha_1\eps < V-\eta \leq -\alpha_2\eps\}\cap\{\eps<0\}=:E_1,
    \\
    \{(D_{\alpha_1},D_{\alpha_2}) = (0,1)\}
    =
    \{\alpha_1\eps\leq \eta-V < \alpha_2\eps\}
    &=
    \{\alpha_1\eps\leq \eta-V < \alpha_2\eps\}\cap\{\eps>0\}=:E_2.
  \end{align*}
  Clearly, $V-\eta>0$, and hence $D_0=1$, on $E_1$, while $\eta-V\geq0$, and hence
  $D_0=0$, on $E_2$. Thus
  \begin{align*}
    E_1 = 
    \{(D_{\alpha_1},D_{\alpha_2}) = (1,0)\} 
    &=
    \{(D_{\alpha_1},D_{\alpha_2}) = (1,0), \, D_0 = 1\},
    \\
    E_2 = 
    \{(D_{\alpha_1},D_{\alpha_2}) = (0,1)\} 
    &=
    \{(D_{\alpha_1},D_{\alpha_2}) = (0,1), \, D_0 = 0\},
  \end{align*}
  and
  \[
    (V-\eta)(D_{\alpha_1} - D_{\alpha_2})
    =
    \begin{cases}
      V-\eta>0, &  \text{if $E_1$ happens},
      \\
      \eta-V\geq0, &  \text{if $E_2$ happens},
      \\
      0, & \text{otherwise}.
    \end{cases}
  \]
  Since $E_1\cup E_2 = \{D_{\alpha_1} = D_0, \, D_{\alpha_2} \neq D_0\}$, we have 
  \[
    (V-\eta)(D_{\alpha_1} - D_{\alpha_2})
    =
    |V-\eta|\, 1\{D_{\alpha_1} = D_0, \, D_{\alpha_2} \neq D_0\}.
  \]
  Taking expectations then yields \eqref{eq:threshold-loss-2}. This completes the
  proof of Proposition~\ref{prop:ftv}.
\end{proof}

\section{Gaussian selection}\label{sec:s-gaussian}

In this section, we derive the results under joint normality of $(V,\eps)$ imposed by
Assumption~\ref{ass:normal} in the paper. Specifically, $(V,\eps)$ follows a
bivariate normal distribution with mean vector $(\mu,0)$ and correlation
$\rho\in(-1,1)$. The standard deviations of $V$ and $\eps$ are denoted by $\sigma>0$
and $\sigma_\eps>0$, respectively.

Let $r_\alpha := \alpha\sigma_\eps/\sigma$ denote \emph{relative noise scale}, the
standard deviation of decision noise $\alpha\eps$ relative to that of true value $V$.
Clearly, $r_0=0$ and $r_\alpha > 0$ for $\alpha > 0$. With $\sigma$ and $\sigma_\eps$
fixed, increasing $\alpha$ is equivalent to increasing $r_\alpha$. For $\alpha\geq0$,
define
\begin{equation}\label{eq:param}
  s(\alpha)
  := \text{Std}(V+\alpha\eps),
  \qquad
  q(\alpha)
  := \Corr(V,V+\alpha\eps)
\end{equation}
to be the standard deviation of the assessment and the value-assessment correlation,
respectively. Under Assumption~\ref{ass:normal},
\begin{equation}\label{eq:param2}
  s(\alpha)
  =
  \sigma\sqrt{1+2\rho r_\alpha + r_\alpha^2} > 0,
  \qquad
  q(\alpha)
  =  
  \frac{\sigma(1+\rho r_\alpha)}{s(\alpha)},
\end{equation}
as stated in equation~\eqref{eq:para} of Section~\ref{sec:ft-gaussian}.
In particular, $s(0) = \sigma$, and 
\begin{equation}\label{eq:ds}
  s'(\alpha)
  =
  \frac{\sigma\sigeps (r_\alpha + \rho)}{s(\alpha)}, \qquad \alpha > 0.
\end{equation}
Hence $s(\alpha)$ increases with $\alpha$ whenever $\rho > - r_\alpha$. Moreover,
$q(0)=1$,
\begin{equation}\label{eq:q-lim}
  \lim_{\alpha\to\infty} q(\alpha)
  =
  \lim_{\alpha\to\infty}
  \frac{\sigma/(\alpha\sigeps)+\rho}{\sqrt{\{\sigma/(\alpha\sigeps)\}^2
  +1
  +2\rho\sigma/(\alpha\sigeps)}}
  =
  \rho,
\end{equation}
and
\begin{equation}\label{eq:q-der}
  q'(\alpha)
  =
  -\frac{\alpha\sigma\sigeps^2(1-\rho^2)}{s(\alpha)^3}<0,
  \qquad \alpha > 0.
\end{equation}
Hence $q(\alpha)$ decreases strictly from $1$ toward $\rho$ as $\alpha$ increases.
Higher decision noise unambiguously lowers the correlation between the true value and
the noisy assessment.

Let $\phi$ and $\Phi$ denote the standard normal density and cdf, and define the
lower-tail inverse Mills ratio
\[
  \lambda(z) = \frac{\phi(z)}{\Phi(z)}, \qquad z\in\RR.
\]
For every $z\in\RR$, let $Y_z$ have the conditional distribution of $Z$ given $Z>-z$,
where $Z$ is a standard normal random variable. By the mean and variance formula of
the truncated normal distribution,
\begin{equation}\label{eq:tn}
  \E(Y_z) = \lambda(z),
  \qquad 
  \Var(Y_z) = 1 - \lambda(z)[z+\lambda(z)],
  \qquad z\in\RR.
\end{equation}
By definition, $\E(Y_z) = \E(Z\mid Z>-z)>-z$. Hence we have $\lambda(z)>\max\{-z,0\}$.
Furthermore, since $\phi'(z) = -z\phi(z)$, it can be
easily verified that
\begin{equation}\label{eq:ld}
  \lambda'(z) = -\lambda(z) [z+\lambda(z)] < 0, \qquad z\in\RR.
\end{equation}
By \eqref{eq:tn} and \eqref{eq:ld}, $\Var(Y_z) = 1+\lambda'(z) > 0$, from which
it follows that $\lambda'(z) > -1$. Consequently, we have
\begin{equation}\label{eq:mv}
  -1<\lambda'(z)<0, \qquad z\in\RR.
\end{equation}

Lastly, let $\Phi_2(\cdot,\cdot;\varrho)$ and $\phi_2(\cdot,\cdot;\varrho)$ denote,
respectively, the cdf and pdf of the standard bivariate normal distribution with
correlation $\varrho\in(-1,1)$.

\subsection{Gaussian selection formulas}\label{sec:gf}

For $\alpha\geq0$, the threshold selection rule
and the corresponding selection rate are given by
\[
  D_\alpha = 1\{ V + \alpha\eps > \eta(\alpha)\},
  \qquad
  p(\alpha) = \PP(D_\alpha =1) = \PP(V+\alpha\eps>\eta(\alpha)).
\]
Here, $\eta(\alpha)$ is adjusted to maintain a fixed selection rate in the fixed-rate
setting, but is held fixed in the fixed-threshold setting. Correspondingly,
$p(\alpha)$ is fixed in the fixed-rate setting and varies with $\alpha$ in the
fixed-threshold setting. If $\alpha = 0$, the selection is noise-free, and we have 
\[
  D_0 = 1\{ V > \eta(0)\},
  \qquad
  p(0) = \PP(V > \eta(0)).
\]

The following proposition provides formulas for the selection rate, the distribution of 
value among selected units, and the selected mean $m(\alpha)$,
under joint normality of $V$ and $\eps$. The corresponding results for the fixed-rate
and fixed-threshold settings are obtained by holding $p(\alpha)$ and $\eta(\alpha)$
fixed, respectively.

\begin{proposition}\label{prop:s-gaussian}
  Suppose Assumption~\ref{ass:normal} holds, and let
  $\tau_\alpha:=(\mu-\eta(\alpha))/s(\alpha)$ with $s(\alpha)$ defined in
  \eqref{eq:param2}. For every $\alpha\geq0$, the selection rate is
  \begin{equation}\label{eq:pa}
    p(\alpha) = \Phi(\tau_\alpha).
  \end{equation}
  If $\alpha>0$, the cdf of the value among selected units and the selected mean are
  given by
  \begin{align}
    \PP(V\leq v\mid D_\alpha=1)
    &=
    \frac
    {\Phi\big((v-\mu)/\sigma\big)-\Phi_2\big( (v-\mu)/\sigma, -\tau_\alpha; q(\alpha)\big)}
    {\Phi(\tau_\alpha)},
    \qquad v\in\RR,
    \label{eq:gs-cdf}
    \\[1mm]
    m(\alpha)
    &=
    \mu+\sigma q(\alpha) \lambda(\tau_\alpha).
    \label{eq:gs-mean}
  \end{align}
  At $\alpha=0$, the selection is noise-free, and the cdf of the value among selected
  units and the selected mean are given by
  \begin{align}
    \PP(V\leq v\mid D_0=1)
    &=
    \begin{cases}
      0, & v\leq\eta(0),\\[2mm]
      \displaystyle
      \frac
      {\Phi\big((v-\mu)/\sigma\big)-\Phi(-\tau_0)}
      {\Phi(\tau_0)}, & v>\eta(0),
    \end{cases}
    \qquad v\in\RR, \label{eq:gs-cdf-zero}
    \\[1mm]
    m(0)
    &=
    \mu+\sigma\lambda(\tau_0). \label{eq:gs-mean-zero}
  \end{align}
\end{proposition}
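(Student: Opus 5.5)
The plan is to reduce everything to standardized bivariate normals. Let $A_\alpha:=V+\alpha\eps$. Under Assumption~\ref{ass:normal}, $(V,A_\alpha)$ is bivariate normal with means $(\mu,\mu)$, standard deviations $(\sigma,s(\alpha))$, and correlation $q(\alpha)$; this follows from $\Var(A_\alpha)=\sigma^2+2\alpha\rho\sigma\sigeps+\alpha^2\sigeps^2$ and $\Cov(V,A_\alpha)=\sigma^2+\alpha\rho\sigma\sigeps$, which reproduce \eqref{eq:param2}. Define the standardized variables $Z_V:=(V-\mu)/\sigma$ and $Z_A:=(A_\alpha-\mu)/s(\alpha)$, which are standard bivariate normal with correlation $q(\alpha)$. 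Then $D_\alpha=1\{Z_A>-\tau_\alpha\}$. Since $Z_A$ is continuous, ties have probability zero and no tie-breaking is needed. Hence $p(\alpha)=\PP(Z_A>-\tau_\alpha)=\Phi(\tau_\alpha)$, which is \eqref{eq:pa}, and this holds for all $\alpha\geq0$.

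For the conditional cdf with $\alpha>0$, note that $|q(\alpha)|<1$ because $\rho\in(-1,1)$ (equivalently, $s(\alpha)>0$ and $V,\eps$ are not collinear). Write
\[
\PP(V\leq v, D_\alpha=1)=\PP(Z_V\leq z)-\PP(Z_V\leq z,\,Z_A\leq-\tau_\alpha),
\qquad z=(v-\mu)/\sigma .
\]
The second term is $\Phi_2(z,-\tau_\alpha;q(\alpha))$. Dividing by $p(\alpha)$ gives \eqref{eq:gs-cdf}.

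For the mean, use the regression decomposition $Z_V=q(\alpha)Z_A+\sqrt{1-q(\alpha)^2}\,\xi$, where $\xi$ is standard normal and independent of $Z_A$. Then
\[
\E(Z_V\mid Z_A>-\tau_\alpha)=q(\alpha)\,\E(Z_A\mid Z_A>-\tau_\alpha)=q(\alpha)\lambda(\tau_\alpha)
\]
by \eqref{eq:tn}. Hence $m(\alpha)=\mu+\sigma q(\alpha)\lambda(\tau_\alpha)$, which is \eqref{eq:gs-mean}.

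At $\alpha=0$ we have $q(0)=1$, and the bivariate normal degenerates, so it is treated directly. Here $D_0=1\{Z_V>-\tau_0\}$. For $v\leq\eta(0)$ the joint probability is zero. For $v>\eta(0)$ it equals $\Phi(z)-\Phi(-\tau_0)$, giving \eqref{eq:gs-cdf-zero}. The mean is $\mu+\sigma\E(Z_V\mid Z_V>-\tau_0)=\mu+\sigma\lambda(\tau_0)$ by \eqref{eq:tn}. This is consistent with \eqref{eq:gs-mean} at $q=1$.

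None of these steps is hard. The only care needed is checking the covariance algebra that identifies $q(\alpha)$, and separating the degenerate case $\alpha=0$, where $\Phi_2$ with correlation $1$ is not defined by a density. In the fixed-threshold setting, the main text's \eqref{eq:gsm} then follows from \eqref{eq:gs-mean} by substituting $\tau_\alpha=\Phi^{-1}(p(\alpha))$.
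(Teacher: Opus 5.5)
Your proposal is correct and takes essentially the paper's route. You standardize $(V,V+\eps\alpha)$ to a standard bivariate normal with correlation $q(\alpha)$, obtain the cdf as $\Phi(z)-\Phi_2(z,-\tau_\alpha;q(\alpha))$, derive the mean from $\E(Z_V\mid Z_A)=q(\alpha)Z_A$ together with \eqref{eq:tn}, and handle $\alpha=0$ directly. Your regression decomposition is only an explicit form of the conditioning property the paper cites, and your remark that ties have probability zero, so tie-breaking is irrelevant, states a point the paper leaves implicit.
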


\begin{proof}[Proof of Proposition~\ref{prop:s-gaussian}]
  Under Assumption~\ref{ass:normal}, $V+\alpha\eps \sim N(\mu,s(\alpha)^2)$ for
  $\alpha\geq0$, and hence
  \[
    p(\alpha)
    = \PP(V+\alpha\eps > \eta(\alpha))
    = \PP\left(\frac{V+\alpha\eps - \mu}{s(\alpha)} > -\tau_\alpha \right)
    = 1 - \Phi(-\tau_\alpha) 
    = \Phi(\tau_\alpha).
  \]
  This proves \eqref{eq:pa}. To show \eqref{eq:gs-cdf}, let $V_\circ :=
  (V-\mu)/\sigma$, $V_\circ^\alpha := (V+\alpha\eps-\mu)/s(\alpha)$, and $v_\circ :=
  (v-\mu)/\sigma$ for $v\in\RR$. Then,
  $\{V\leq v\} = \{V_\circ \leq v_\circ\}$, 
  $\{D_\alpha=1\} = \{V_\circ^\alpha > -\tau_\alpha\}$, and
  \begin{equation} \label{eq:Vv}
    \PP(V\leq v \mid D_\alpha = 1)
    =
    \frac
    {\PP(V_\circ\leq v_\circ, V_\circ^\alpha > -\tau_\alpha)}
    {\PP(V^\alpha_\circ>-\tau_\alpha)}.
  \end{equation}
  When $\alpha>0$, $q(\alpha)\in(-1,1)$, and, under Assumption~\ref{ass:normal},
  $(V_\circ, V_\circ^\alpha)$ is standard bivariate normal with correlation
  $q(\alpha)$. Consequently, $\PP(V^\alpha_\circ>-\tau_\alpha) = \Phi(\tau_\alpha)$ 
  and
  \[
    \PP(V_\circ\leq v_\circ, V_\circ^\alpha > -\tau_\alpha)
    =
    \PP(V_\circ\leq v_\circ)
    -
    \PP(V_\circ\leq v_\circ,\, V_\circ^\alpha \leq -\tau_\alpha)
    =
    \Phi(v_\circ)-\Phi_2(v_\circ,-\tau_\alpha;\, q(\alpha)).
  \]
  Together with \eqref{eq:Vv}, this proves \eqref{eq:gs-cdf}.

  To show \eqref{eq:gs-mean}, we first note that
  \[
    \E(V_\circ\mid D_\alpha = 1)
    =
    \E(V_\circ\mid V_\circ^\alpha > -\tau_\alpha)
    =
    \E \big[ \E(V_\circ\mid V_\circ^\alpha) \mid V_\circ^\alpha > -\tau_\alpha \big],
  \]
  where the last equality holds by the law of iterated expectations. 
  Since $(V_\circ,V_\circ^\alpha)$ is standard bivariate normal with correlation
  $q(\alpha)\in(-1,1)$, the conditioning property of the multivariate normal
  distribution implies that $\E(V_\circ\mid V_\circ^\alpha) = q(\alpha)
  V_\circ^\alpha$, and therefore,
  \begin{equation}\label{eq:v1}
    \E(V_\circ\mid D_\alpha = 1)
    =
    q(\alpha) \, \E (V_\circ^\alpha \mid V_\circ^\alpha > -\tau_\alpha).
  \end{equation}
  Since $V_\circ^\alpha$ is a standard normal random variable, it follows from
  \eqref{eq:tn} that
  \begin{equation}\label{eq:v2}
    \E(V_\circ^\alpha \mid V_\circ^\alpha > -\tau_\alpha)
    = 
    \lambda(\tau_\alpha).
  \end{equation}
  Equation~\eqref{eq:gs-mean} then follows immediately from \eqref{eq:v1},
  \eqref{eq:v2}, and $V = \mu + \sigma V_\circ$. 

  At $\alpha=0$, we have
  $\{V\leq v\} = \{V_\circ \leq (v-\mu)/\sigma\}$ and 
  $\{D_0=1\} = \{V_\circ > (\eta(0)-\mu)/\sigma\}$. 
  So, if $v\leq\eta(0)$, then $\{V\leq v\}\cap\{D_0=1\}=\varnothing$, and hence
  \(
    \PP(V\leq v\mid D_0=1)=0.
  \)
  If $v>\eta(0)$, then, with $v_\circ = (v-\mu)/\sigma$ and $\tau_0 =
  (\mu-\eta(0))/\sigma$, we have
  \[
    \PP(V\leq v\mid D_0=1)
    =
    \frac{\PP(-\tau_0<V_\circ\leq v_\circ)}{\PP(V_\circ>-\tau_0)}
    =
    \frac{\Phi(v_\circ)-\Phi(-\tau_0)}{1-\Phi(-\tau_0)}
    =
    \frac{\Phi(v_\circ)-\Phi(-\tau_0)}{\Phi(\tau_0)}.
  \]
  This proves \eqref{eq:gs-cdf-zero}. Finally, since $V_\circ$ is a standard normal
  random variable, applying \eqref{eq:tn} again yields
  \(
    m(0)
    =
    \mu+\sigma\,\E(V_\circ\mid V_\circ>-\tau_0)
    =
    \mu+\sigma\lambda(\tau_0),
  \)
  which proves \eqref{eq:gs-mean-zero}.
\end{proof}

\subsection{Selection at a fixed rate}\label{sec:gfr}

Equation~\eqref{eq:pa} in Proposition~\ref{prop:s-gaussian} shows that
\(
  p(\alpha) = \Phi\big((\mu-\eta(\alpha))/s(\alpha)\big)
\)
under the joint normality of $(V,\eps)$. Under the fixed-rate setting, the selection
rate $p(\alpha)=p\in(0,1)$ is fixed, so that the selection threshold $\eta(\alpha)$
adjusts and is determined by 
\begin{equation}\label{eq:eta-alp}
  \eta(\alpha) = \mu-s(\alpha)\,\Phi^{-1}(p), \qquad \alpha \geq 0.
\end{equation}
The following corollary to Proposition~\ref{prop:s-gaussian} illustrates the results
in Proposition~\ref{prop:fosd} under joint normality of $(V,\eps)$.

\begin{corollary} \label{cor:s-g-fr}
  Suppose Assumption~\ref{ass:normal} holds and the selection rate $\PP(D_\alpha=1) =
  p \in(0,1)$ is fixed. Then, for every $\alpha>0$ and $v\in\RR$,
  \begin{align}\label{eq:fp}
    \frac{\mathrm d}{\mathrm d\alpha} \PP(V\leq v\mid D_\alpha=1)
    >0
    \qquad\mbox{and}\qquad
    m'(\alpha)
    < 0.
  \end{align}
  Moreover,
  \begin{equation}\label{eq:fp-lim}
    m(\alpha) 
    \to
    \mu + \sigma\rho \lambda\big(\Phi^{-1}(p)\big),
    \qquad \text{as } \alpha\to\infty.
  \end{equation}
\end{corollary}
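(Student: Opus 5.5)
In the fixed-rate setting, \eqref{eq:eta-alp} gives $\tau_\alpha=\Phi^{-1}(p)=:c$ for every $\alpha\geq0$. The plan is to use the closed forms of Proposition~\ref{prop:s-gaussian} with $\tau_\alpha\equiv c$, so that all dependence on $\alpha$ runs through $q(\alpha)$. By \eqref{eq:gs-mean}, $m(\alpha)=\mu+\sigma q(\alpha)\lambda(c)$ for $\alpha>0$. Since $\lambda(c)>0$ and $q'(\alpha)<0$ by \eqref{eq:q-der}, we get $m'(\alpha)=\sigma\lambda(c)q'(\alpha)<0$. The limit \eqref{eq:fp-lim} follows at once from \eqref{eq:q-lim}, because $q(\alpha)\to\rho$.

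For the cdf claim, \eqref{eq:gs-cdf} gives
\[
G_\alpha(v):=\PP(V\leq v\mid D_\alpha=1)=\frac{\Phi(v_\circ)-\Phi_2(v_\circ,-c;q(\alpha))}{p},\qquad v_\circ=(v-\mu)/\sigma.
\]
The numerator $\Phi(v_\circ)$ and the denominator $p$ do not depend on $\alpha$. Hence
\[
\frac{\mathrm d}{\mathrm d\alpha}G_\alpha(v)=-\frac{q'(\alpha)}{p}\,\frac{\partial}{\partial\varrho}\Phi_2(v_\circ,-c;\varrho)\Big|_{\varrho=q(\alpha)}.
\]
I will then invoke Plackett's identity $\partial\Phi_2(x,y;\varrho)/\partial\varrho=\phi_2(x,y;\varrho)$. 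Since $q(\alpha)\in(-1,1)$ for $\alpha>0$ (shown in the proof of Proposition~\ref{prop:s-gaussian}), this density is strictly positive at every point. Together with $q'(\alpha)<0$, the derivative is therefore strictly positive for every $v\in\RR$. This is exactly the strict first-order stochastic dominance ordering, consistent with Proposition~\ref{prop:fosd}.

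The main obstacle is justifying Plackett's identity, since the paper has not stated it. I would prove it in a few lines using the heat-equation property of the bivariate normal density, $\partial_\varrho\phi_2=\partial_x\partial_y\phi_2$, and then differentiate under the integral sign and integrate twice. Differentiation under the integral is legitimate because $\varrho=q(\alpha)$ stays in a compact subset of $(-1,1)$ locally in $\alpha$.

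A cheaper alternative avoids the identity. Write $\Phi_2(x,y;\varrho)=\int_{-\infty}^y\phi(t)\Phi\bigl((x-\varrho t)/\sqrt{1-\varrho^2}\bigr)\,\mathrm dt$ and differentiate in $\varrho$ directly. The resulting integrand can be integrated in closed form to $\phi_2(x,y;\varrho)$, which gives the same strict positivity.
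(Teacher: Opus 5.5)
Your proposal is correct and follows the paper's proof essentially step for step. With $\tau_\alpha\equiv\Phi^{-1}(p)$, all $\alpha$-dependence runs through $q(\alpha)$; the cdf derivative uses $\partial_\varrho\Phi_2=\phi_2$, obtained exactly as in the paper by integrating $\partial_\varrho\phi_2=\partial_x\partial_y\phi_2$; and the mean and limit claims follow from $q'(\alpha)<0$, $\lambda>0$, and $q(\alpha)\to\rho$. Your alternative via $\Phi_2(x,y;\varrho)=\int_{-\infty}^y\phi(t)\Phi\bigl((x-\varrho t)/\sqrt{1-\varrho^2}\bigr)\,\mathrm dt$ also works, since the $\varrho$-derivative of that integrand equals $\partial_t\phi_2(x,t;\varrho)$.
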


\begin{proof}[Proof of Corollary~\ref{cor:s-g-fr}]
  Fix $v\in\RR$ and let $v_\circ=(v-\mu)/\sigma$. With the fixed selection rate
  $p\in(0,1)$, \eqref{eq:eta-alp} holds, and hence $\tau_\alpha$ in
  Proposition~\ref{prop:s-gaussian} equals $\Phi^{-1}(p)$. Substituting $\tau_\alpha
  = \Phi^{-1}(p)$ into \eqref{eq:gs-cdf} yields
  \[
    \PP(V\leq v\mid D_\alpha=1)
    =
    p^{-1}
    \left[
      \Phi(v_\circ)-\Phi_2\big(v_\circ,-\Phi^{-1}(p);q(\alpha)\big)
    \right],
    \qquad \alpha >0,
    \quad  q(\alpha) \in (-1,1).
  \]
  Differentiation of the bivariate normal density gives
  \(
    \partial\phi_2(x,y;\varrho)/\partial\varrho
    =
    \partial^2\phi_2(x,y;\varrho)/\partial x\,\partial y,
  \)
  and integrating over $(-\infty,x]\times(-\infty,y]$ yields
  \begin{equation}\label{eq:nd}
    \frac{\partial}{\partial\varrho}\Phi_2(x,y;\varrho)
    =
    \phi_2(x,y;\varrho),
    \qquad x,y\in\RR,\quad \varrho\in(-1,1).
  \end{equation}
  Then, the chain rule, \eqref{eq:q-der}, and \eqref{eq:nd} together imply that, for
  $\alpha>0$,
  \begin{align*}
    \frac{\diff}{\diff\alpha}\PP(V\leq v\mid D_\alpha=1)
    &=
    -p^{-1} \phi_2\big(v_\circ,-\Phi^{-1}(p) ; q(\alpha) \big)
    q'(\alpha)
    \\
    &=
    \frac{\alpha\sigma\sigeps^2(1-\rho^2)}{p s(\alpha)^3}\,
    \phi_2\big(v_\circ,-\Phi^{-1}(p) ; q(\alpha)\big) 
    >0.
  \end{align*}
  The strict inequality follows from $p>0$, $\alpha>0$, $|\rho|<1$, and the
  positivity of the bivariate normal density at every finite pair of arguments
  when $q(\alpha)\in(-1,1)$. This proves the first inequality in \eqref{eq:fp}.
  Similarly, substituting $\tau_\alpha = \Phi^{-1}(p)$ into \eqref{eq:gs-mean} gives
  $m(\alpha)=\mu+\sigma q(\alpha)\lambda\big(\Phi^{-1}(p)\big)$. Hence
  \[
    m'(\alpha)
    = 
    \sigma \lambda\big(\Phi^{-1}(p)\big) q'(\alpha)
    =
    - s(\alpha)^{-3}\alpha\sigma^2\sigeps^2(1-\rho^2) \lambda\big(\Phi^{-1}(p)\big)
    <0,
  \]
  which proves the second inequality in \eqref{eq:fp}.
  Finally, substituting the limit in \eqref{eq:q-lim} into \eqref{eq:gs-mean} with
  $\tau_\alpha=\Phi^{-1}(p)$ proves \eqref{eq:fp-lim}.
\end{proof}

Corollary~\ref{cor:s-g-fr} shows that, as the noise scale increases, the cdf
of the true value of the selected units strictly rises at every point, and the
selected mean strictly falls. In particular, the first inequality in \eqref{eq:fp} is
the strict Gaussian counterpart to the distribution-free stochastic-dominance
ordering in Proposition~\ref{prop:fosd}.

Although the selected mean is strictly decreasing in the noise scale, its right
derivative $m'(0+)$ is zero. To see this, note that \eqref{eq:q-der} gives
$q'(0) = 0$ and $q''(0) = -\sigma_\eps^2(1-\rho^2)/\sigma^2$, so
\[
  q(\alpha) = 1 - \frac{\alpha^2\sigma_\eps^2(1-\rho^2)}{2\sigma^2} + o(\alpha^2),
  \qquad \alpha\downarrow 0.
\]
Substituting this into \eqref{eq:gs-mean} with $\tau_\alpha=\Phi^{-1}(p)$ yields
\[
  m(\alpha)
  =
  \mu+\sigma\lambda\big(\Phi^{-1}(p)\big)
  -\frac{\alpha^2\sigeps^2(1-\rho^2)}{2\sigma}
    \lambda\big(\Phi^{-1}(p)\big)
  +o(\alpha^2),
  \qquad \alpha\downarrow0.
\]
Thus the leading decline is quadratic in the noise scale. By \eqref{eq:ld},
$\lambda$ is strictly decreasing, while $\Phi^{-1}$ is strictly increasing.
Hence $\lambda(\Phi^{-1}(p))$ is strictly decreasing in $p$.
Holding the distribution of $(V,\eps)$ fixed, more selective rules
(smaller $p$) therefore have a larger quadratic loss in the selected mean
for the same small increase in noise from $\alpha=0$.

\subsection{Selection at a fixed threshold}\label{sec:gft}

Now, we consider the fixed-threshold setting under Assumption~\ref{ass:normal}.
Under joint normality of $(V,\eps)$, the result in Proposition~\ref{prop:ftv} can be
strengthened. In particular, the threshold-adjusted aggregate value $W(\alpha)$ is
\emph{strictly} decreasing for every $\rho\in(-1,1)$. To see this, fix
$0\leq\alpha_1<\alpha_2$ and define the realized loss
\[
  L_\alpha
  :=|V-\eta|\,1\{D_\alpha\neq D_0\},
  \qquad D_0=1\{V>\eta\}.
\]
For each realization with $V>\eta$, an error can occur only when
$\eps<0$; once increasing $\alpha$ causes rejection, the unit remains
rejected at every larger noise scale. Similarly, when $V<\eta$,
an error can occur only when $\eps>0$; once selected, the unit remains
selected at every larger noise scale. If $V=\eta$, the loss is zero.
Thus $L_{\alpha_2}\geq L_{\alpha_1}$ for every realization.

Now consider the event
\[
  E:=
  \left\{
    V>\eta,\ \eps<0,\quad
    \alpha_1<\frac{V-\eta}{-\eps}<\alpha_2
  \right\}.
\]
Under Assumption~\ref{ass:normal}, $\PP(E)>0$ because the nondegenerate
joint Gaussian density is positive everywhere. On $E$, $V+\alpha_1\eps>\eta $ and
$V+\alpha_2\eps<\eta$. Consequently, $D_{\alpha_1}=D_0=1$ and $D_{\alpha_2}=0$,
giving $L_{\alpha_1}=0$ and $L_{\alpha_2}=V-\eta>0$. Using
$W(0)-W(\alpha)=\E(L_\alpha)$, we obtain
\[
  W(\alpha_1)-W(\alpha_2)
  =\E(L_{\alpha_2}-L_{\alpha_1})
  \geq\E\big[(V-\eta)1_E\big]
  >0.
\]
Hence $W(\alpha)$ is strictly decreasing on $[0,\infty)$.

In the rest of this section, we focus on the analysis of how the selected mean 
$m(\alpha)$ responds to increasing noise scale $\alpha>0$.
In the current setting, $\eta(\alpha)=\eta$, so $\tau_\alpha$ defined in
Proposition~\ref{prop:s-gaussian} becomes 
\begin{equation}\label{eq:tau-ft}
  \tau_\alpha = \frac{\mu-\eta}{s(\alpha)}, \qquad \alpha \geq 0.
\end{equation}
In particular, $\tau_0=(\mu-\eta)/\sigma$. It follows from \eqref{eq:gs-mean},
\eqref{eq:gs-mean-zero}, and $q(0)=1$ that
\begin{equation}\label{eq:evd}
  m(\alpha)=\mu+\sigma q(\alpha)\lambda(\tau_\alpha),
  \qquad \alpha \geq 0.
\end{equation}

\subsubsection{Derivative condition for the selected mean}

We first establish the properties of the function $\psi$ used in
Proposition~\ref{prop:gcs} in the following lemma, and then prove
Proposition~\ref{prop:gcs}.

\begin{lemma}\label{lem:s-psi}
  The function $\psi(z)=z[z+\lambda(z)]$ is strictly increasing from $\RR$
  onto $(-1,\infty)$.
\end{lemma}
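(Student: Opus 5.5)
We prove the lemma by computing $\psi'$ and reducing its sign to variance facts about truncated normals, then reading off the range from the limits at $\pm\infty$.

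\textbf{Derivative.} Write $\psi(z)=z^2+z\lambda(z)$. By \eqref{eq:ld},
\[
  \psi'(z)=2z+\lambda(z)+z\lambda'(z)=2z+\lambda(z)-z\lambda(z)[z+\lambda(z)].
\]
Set $h(z):=z+\lambda(z)$. Then $h>0$, since $\lambda(z)>-z$, and \eqref{eq:mv} gives $h'(z)=1+\lambda'(z)\in(0,1)$. In this notation
\[
  \psi'(z)=h(z)+z\,h'(z)=\frac{\diff}{\diff z}\bigl[z\,h(z)\bigr],
\]
which is just the definition of $\psi$. Since $h>0$ and $h'>0$, we get $\psi'(z)>0$ immediately for $z\geq0$.

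\textbf{The case $z<0$ (main obstacle).} Here we need $h(z)>|z|\,h'(z)=-z\bigl(1-\lambda(z)h(z)\bigr)$. Using $\Var(Y_z)=1-\lambda(z)h(z)$ from \eqref{eq:tn}, the target becomes
\[
  z+\lambda(z)>-z\,\Var(Y_z), \qquad\text{equivalently}\qquad \E(Y_z)+z>-z\,\Var(Y_z).
\]
In terms of $W:=Y_z+z$, the overshoot of a standard normal above the cutoff $c:=-z>0$, this reads $\E(W)>c\,\Var(W)$. This is the classical Mills-ratio inequality for the truncated normal. The planned route is:
\begin{itemize}
  \item The density of $W$ on $(0,\infty)$ is proportional to $e^{-cw-w^2/2}$.
  \item This density is log-concave and is dominated, in the likelihood-ratio sense, by the exponential density with rate $c$.
  \item For such variables, integration by parts gives $\E(W^2)=\bigl(1-c\,\E W\bigr)\cdot\ldots$; more precisely, with $f(w)\propto e^{-cw-w^2/2}$ one has $-f'=(c+w)f$. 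Integrating against $1$ and against $w$ yields $c\,\E W+\E W^2=1$ and $c+\E W=f(0)/\Pr$.
  \item The inequality $\E W>c\,\Var W$ then reduces to $\E W\,(1+c\,\E W)>c\,\E W^2=c\,(1-c\,\E W)$, i.e.\ to a quadratic inequality in $\E W$ involving the hazard rate $f(0)$, namely Sampford's bound.
\end{itemize}
If this becomes messy, the fallback is a direct analytic argument. Define $g(z):=h(z)+z\,h'(z)$ for $z<0$ and show $g>0$ through its asymptotics: as $z\to-\infty$, $\lambda(z)\sim -z-1/z+2/z^3$, and from this $g(z)\sim -2/z^{3}\cdot\ldots>0$. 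Then argue that $g$ has no zero, using the Riccati equation $\lambda'=-\lambda h$.

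\textbf{Range.} As $z\to\infty$, $\lambda\to0$, so $\psi(z)\sim z^2\to\infty$. As $z\to-\infty$, the expansion $\lambda(z)=-z-1/z+O(z^{-3})$ gives $h(z)=-1/z+O(z^{-3})$, so $\psi(z)=z\,h(z)\to-1$. Strict monotonicity and continuity then give that $\psi$ maps $\RR$ onto $(-1,\infty)$.
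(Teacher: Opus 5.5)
\textbf{Gap: the case $z<0$ is never proved.} Your computation $\psi'=h+zh'$ is correct, and so are the case $z\geq0$, the reduction of the case $z<0$ to $\E W>c\Var W$ (with $c=-z$; your $W$ is the paper's $X_z$), and the range argument. But the case $z<0$ is the entire difficulty, and it is left open.

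Your quadratic condition $c(\E W)^2+(1+c^2)\E W-c>0$ is not Sampford's bound. That bound, $\phi(c)/[1-\Phi(c)]>(3c+\sqrt{c^2+8})/4$, reads $2(\E W)^2+c\,\E W-1>0$ in your notation; it is convexity of the normal hazard rate. It does imply your condition, by the following check:
\begin{itemize}
  \item Write $M=c+\E W$. Your condition becomes $Q(M):=cM^2+(1-c^2)M-2c>0$.
  \item $Q$ is increasing on $[c,\infty)$, and $M>c$.
  \item At $M_S=(3c+s)/4$ with $s=\sqrt{c^2+8}$, one gets $Q(M_S)=(s-c)^2/[4(s+c)]>0$.
\end{itemize}
That verification is absent from your proposal, and with it the lemma would rest on an external Mills-ratio inequality. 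The fallback does not close the gap either. The asymptotics as $z\to-\infty$ give positivity only for large $|z|$, and ``$g$ has no zero'' is exactly the claim to be proved.

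\textbf{The missing step.} One more integration by parts in your own setup finishes it. Integrating $-f'=(c+w)f$ against $w^2$ gives $c\,\E W^2+\E W^3=2\,\E W$. Combined with $\E W^2=1-c\,\E W$, this yields
\[
  \Cov(W^2,W)=\E W^3-\E W^2\,\E W=c(\E W)^2+(1+c^2)\,\E W-c .
\]
So your target inequality is exactly $\Cov(W^2,W)>0$. This holds for any nondegenerate positive random variable, since $\Cov(W^2,W)=\tfrac12\E[(W+W')(W-W')^2]$ for an independent copy $W'$. The identity is valid for every real $c$, so it also covers your case $z\geq0$.

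This is precisely the paper's proof. It writes $1+\psi(z)=\E X_z^2$, where $X_z$ has density proportional to $e^{zx-x^2/2}$ on $(0,\infty)$. Differentiating in this natural parameter gives $\psi'(z)=\Cov(X_z^2,X_z)$, which settles all $z$ at once, with no split on the sign of $z$ and no Mills-ratio bounds. For the range, your use of the expansion $\lambda(z)=-z-1/z+O(z^{-3})$ is an acceptable substitute for the paper's change of variables, which shows $z^2[1+\psi(z)]\to2$.
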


\begin{proof}
  Let $Z$ be a standard normal random variable and, for each $z\in\RR$, let $X_z$
  have the conditional distribution of $Z+z$ given $Z>-z$. By \eqref{eq:tn}, 
  $\E X_z=z+\lambda(z)$ and
  \begin{equation}\label{eq:xpsi}
    \E X_z^2 
    = \Var(X_z) + (\E X_z)^2 
    = 1-\lambda(z)[z+\lambda(z)] + [z+\lambda(z)]^2
    = 1+\psi(z).
  \end{equation}
  For each fixed $z\in\RR$, the density of $X_z$ is $f(x,z):=\phi(x-z)/\Phi(z)$ for
  $x\in(0,\infty)$, so \eqref{eq:xpsi} gives
  \begin{equation}\label{eq:xpsi2}
    \psi(z) = \int_0^\infty x^2 f(x,z) \diff x - 1.
  \end{equation}
  Using $\phi'(z) = -z\phi(z)$, we obtain
  \begin{equation}\label{eq:f'}
    \partial_z f(x,z)
    = 
    \frac{(x-z)\phi(x-z)}{\Phi(z)} - \frac{\phi(x-z)}{\Phi(z)} \lambda(z)
    =
    [x-z-\lambda(z)]f(x,z)
    =
    (x-\E X_z)f(x,z).
  \end{equation}
  On every bounded interval of $z$, there is a constant $C>0$ such that
  \[
    x^2|\partial_z f(x,z)| \leq C(1+x^3)\mathrm{e}^{-x^2/4}, 
    \qquad x>0.
  \]
  The right-hand side is integrable, so dominated convergence justifies the
  interchangeability of differentiation and integration. Then, taking derivative on
  both sides of \eqref{eq:xpsi2} with respect to $z$ and using \eqref{eq:f'} yields
  \[
    \psi'(z)
    =\int_0^\infty x^2 \partial_z f(x,z)\,\diff x
    =\int_0^\infty x^2(x-\E X_z)f(x,z)\,\diff x
    =\E[X_z^2(X_z-\E X_z)]
    =\Cov(X_z^2,X_z).
  \]
  Recall that if $X$ is a random variable and $X_\circ$ is an independent copy of
  $X$, then $\Cov(f(X),g(X)) = \E[(f(X)-f(X_\circ))(g(X)-g(X_\circ)]/2$, for any
  functions $f$ and $g$, provided the relevant expectations are finite.
  Consequently, for an independent copy $X_{z,\circ}$ of $X_z$,
  \[
    \Cov(X_z^2,X_z)
    =
    \frac12\E\big[(X_z^2-X_{z,\circ}^2)(X_z-X_{z,\circ})\big]
    =
    \frac12\E\big[(X_z+X_{z,\circ})(X_z-X_{z,\circ})^2\big]>0,
  \]
  where the inequality holds because $X_z$ is strictly positive and nondegenerate.
  This shows that $\psi'(z)>0$ for all $z\in\RR$, and hence $\psi$ is strictly
  increasing on $\RR$.

  To deduce the range of $\psi$, first note that $\psi(z) = z^2+z\lambda(z)>z^2$ for
  $z>0$, and hence $\psi(z)\to\infty$ as $z\to\infty$. For the behavior of $\psi(z)$
  when $z<0$, note that \eqref{eq:xpsi2} gives
  \(
    1+\psi(z) = \int_0^\infty x^2\phi(x-z)\diff x /\Phi(z).
  \)
  Since $\Phi(z) = 1-\Phi(-z) = \int_{-z}^\infty\phi(y)\diff y = \int_0^\infty
  \phi(x-z)\diff x$, it follows that 
  \[
    1 + \psi(z) 
    =
    \frac{\int_0^\infty x^2\phi(x-z)\diff x}
    {\int_0^\infty\phi(x-z)\diff x} 
    =
    \frac
    {\int_0^\infty x^2 \mathrm{e}^{zx-x^2/2}\diff x}
    {\int_0^\infty \mathrm{e}^{zx-x^2/2}\diff x},
    \qquad z\in\RR.
  \]
  For $z<0$, applying the change of variables $u=-zx$ gives
  \[
    z^2[1 + \psi(z)]
    =
    \frac{\int_0^\infty u^2 \mathrm e^{-u-u^2/(2z^2)}\,\diff u}
         {\int_0^\infty \mathrm e^{-u-u^2/(2z^2)}\,\diff u} 
    \to 2,
    \qquad \text{as }z\to-\infty,
  \]
  Hence $\psi(z)\to-1$ as $z\to-\infty$. Together with continuity and
  strict monotonicity, these limits establish that the range of $\psi$ is
  $(-1,\infty)$.
\end{proof}

\begin{proof}[Proof of Proposition~\ref{prop:gcs}]
  Differentiating \eqref{eq:evd} with respect to $\alpha$ gives 
  \begin{equation}\label{eq:dE}
    m'(\alpha)
    =
    \sigma\left[
      q'(\alpha) \lambda(\tau_\alpha) + q(\alpha) \lambda'(\tau_\alpha)
      \diff\tau_\alpha/\diff\alpha
    \right].
  \end{equation}
  where $\tau_\alpha = (\mu-\eta)/s(\alpha)$ was defined in \eqref{eq:tau-ft}.
  Using \eqref{eq:ds}, we obtain
  \begin{equation} \label{eq:dtau}
    \frac{\diff\tau_\alpha}{\diff\alpha}
    =
    -\frac{\tau_\alpha s'(\alpha)}{s(\alpha)}
    =
    -\frac{\tau_\alpha \sigma \sigeps (r_\alpha+\rho)}{s(\alpha)^2}.
  \end{equation}
  Substituting \eqref{eq:q-der}, \eqref{eq:ld}, \eqref{eq:dtau} and the expression of
  $q(\alpha)$ in \eqref{eq:param2} into \eqref{eq:dE}, and using 
  $\psi(z):=z[z+\lambda(z)]$ yields
  \begin{equation} \label{eq:sgd}
    m'(\alpha)
    =
    s(\alpha)^{-3}\sigma^3\sigeps\lambda(\tau_\alpha)
    [C_\alpha\psi(\tau_\alpha)-B_\alpha],
  \end{equation}
  where
  \begin{equation}\label{eq:bc}
    B_\alpha=r_\alpha(1-\rho^2),\qquad 
    C_\alpha=(1+\rho r_\alpha)(r_\alpha+\rho).
  \end{equation}
  Since $s(\alpha)^{-3}\sigma^3\sigeps\lambda(\tau_\alpha)>0$, \eqref{eq:sgd} implies
  that $m'(\alpha)$ has the sign of $C_\alpha\psi(\tau_\alpha)-B_\alpha$.
  Given \eqref{eq:tau-ft}, this proves the sign identity \eqref{eq:gms}.

  Finally, note that for $\alpha>0$, $r_\alpha>0$ and
  $(1+\rho r_\alpha)(r_\alpha + \rho)>0$ if $\rho>-\min\{r_\alpha,
  1/r_\alpha\}$. Thus $m'(\alpha)>0$ if and only if
  $\psi(\tau_\alpha)>B_\alpha/C_\alpha$. Applying Lemma~\ref{lem:s-psi}
  and $p(\alpha)=\Phi(\tau_\alpha)$ proves \eqref{eq:grc}.
\end{proof}

\subsubsection{Shape of the selected mean under nonnegative correlation}

Define
\begin{equation}\label{eq:s-cutoff-index}
  K(r,\rho):=\frac{r(1-\rho^2)}{(1+\rho r)(r+\rho)},
  \qquad r\geq0, \quad \rho\in(-1,1),
\end{equation}
whenever the denominator is nonzero. 

\begin{proof}[Proof of Proposition~\ref{prop:shape}]
  For $\alpha>0$ and $\rho\geq0$, $B_\alpha$ and $C_\alpha$ defined in \eqref{eq:bc}
  are both strictly positive, and $K(r_\alpha,\rho) = B_\alpha/ C_\alpha > 0$.
  Consequently, \eqref{eq:sgd} and Lemma~\ref{lem:s-psi} imply
  \begin{equation}\label{eq:sgn}
    \sgn \big(m'(\alpha) \big)
    =
    \sgn \big( \psi(\tau_\alpha) - K(r_\alpha,\rho)\big)
    =
    \sgn \big( \tau_\alpha - \psi^{-1}(K(r_\alpha,\rho))\big),
  \end{equation}
  where
  \begin{equation}\label{eq:tau0}
    \tau_\alpha 
    = \frac{\mu-\eta}{s(\alpha)}
    = \frac{\mu-\eta}{\sigma\sqrt{1+2\rho r_\alpha + r_\alpha^2}}
    = \frac{\tau_0}{\sqrt{1+2\rho r_\alpha + r_\alpha^2}},
    \qquad
    \tau_0 = \frac{\mu-\eta}\sigma.
  \end{equation}
  For $\rho\geq0$ and $r>0$, define
  \begin{equation}\label{eq:dH}
    d_\rho(r):=\sqrt{1+2\rho r+r^2},
    \qquad
    H_\rho(r):=d_\rho(r)\psi^{-1}\!\bigl(K(r,\rho)\bigr).
  \end{equation}
  Since $p(0)=\Phi(\tau_0)$ and $\Phi$ is strictly increasing,
  \eqref{eq:sgn}--\eqref{eq:dH} imply
  \begin{equation}\label{eq:ms}
    \sgn \big(m'(\alpha)\big)
    =
    \sgn \big(\tau_0-H_\rho(r_\alpha) \big)
    =
    \sgn \big(p(0)-\Phi(H_\rho(r_\alpha)) \big),
    \qquad \alpha>0.
  \end{equation}
  Thus the shape of $m(\alpha)$ is determined by where $\Phi(H_\rho(r_\alpha))$
  crosses the fixed noise-free selection rate $p(0)$.

  In the following analysis, we note that $\psi^{-1}(x)>0$ for $x>0$, because
  $\psi(0)=0$ and $\psi$ is strictly increasing as shown by Lemma~\ref{lem:s-psi}.

  We first consider $\rho=0$. In this case, $K(r,0)=1$ and
  $H_0(r)=\sqrt{1+r^2}\,\psi^{-1}(1)$ for $r>0$.
  Since $\psi^{-1}(1)>0$, $H_0$ is strictly increasing, with limits $\psi^{-1}(1)$ as
  $r\downarrow0$ and $\infty$ as $r\to\infty$.
  Consequently, $\Phi(H_0(r_\alpha))$ is strictly increasing in $\alpha$, with 
  \[
    \lim_{\alpha\downarrow0} \Phi(H_0(r_\alpha)) 
    = \Phi(\psi^{-1}(1)) =: p^\dagger\approx 0.7995,
    \qquad
    \lim_{\alpha\to\infty} \Phi(H_0(r_\alpha)) = 1.
  \]
  If $p(0) \leq p^{\dagger}$, then $p(0) < \Phi(H_0(r_\alpha))$ for every $\alpha>0$.
  By \eqref{eq:ms}, $m'(\alpha)<0$ throughout, so $m(\alpha)$ is strictly decreasing.
  If $p(0)>p^\dagger$, there is a unique $\alpha^*>0$ satisfying
  \(
    \Phi(H_0(r_{\alpha^*})) = p(0).
  \)
  Solving this equation gives
  \begin{equation}\label{eq:a*}
    \alpha^* = \frac{\sigma}{\sigma_\eps}
    \sqrt{\left(\frac{\Phi^{-1}(p(0))}{\Phi^{-1}(p^\dagger)}\right)^2-1}.
  \end{equation}
  By \eqref{eq:ms},
  $m'(\alpha) >0$ for $0< \alpha < \alpha^*$,
  $m'(\alpha) =0$ for $\alpha = \alpha^*$, and
  $m'(\alpha) <0$ for $\alpha > \alpha^*$. 
  Thus $m(\alpha)$ is single-peaked, attaining its unique maximum at $\alpha^*$.
  Finally, the first equality in \eqref{eq:sgn} gives
  \(
    \psi(\tau_{\alpha^*}) = K(r_{\alpha^*},0) = 1,
  \)
  so
  \begin{equation}\label{eq:p*}
    p(\alpha^*) = \Phi(\tau_{\alpha^*}) = \Phi(\psi^{-1}(1)) = p^\dagger.
  \end{equation}
  This proves part~(i) of the proposition.

  Next, consider $0<\rho<1$. For $r>0$, $K(r,\rho)>0$ and $\psi^{-1}(K(r,\rho))>0$,
  and hence $H_\rho(r)>0$. Differentiating $d_\rho(r)$ and $K(r,\rho)$ with respect
  to $r$ gives
  \[
    d_\rho'(r) = \frac{r + \rho}{d_\rho(r)} > 0,
    \qquad
    \partial_r K(r,\rho) 
    =
    \frac{\rho(1-\rho^2)(1-r^2)}{(1+\rho r)^2(r+\rho)^2},
    \qquad 
    r>0.
  \]
  For $r\in(0,1]$, $\partial_r K(r,\rho) \geq 0$, and hence
  \[
    H_\rho'(r) 
    =
    d_\rho'(r)\psi^{-1}(K(r,\rho)) 
    +
    d_\rho(r) \frac{\partial_r K(r,\rho)}{\psi'(\psi^{-1}(K(r,\rho)))}
    >0.
  \]
  For $r>1$, $\partial_r K(r,\rho)<0$. In this case, write
  $z=\psi^{-1}(K(r,\rho))>0$. Hence $K(r,\rho) = \psi(z)$ and 
  \[
    \frac{H_\rho'(r)}{H_\rho(r)}
    =
    \frac{d_\rho'(r)}{d_\rho(r)}
    +
    \frac{\partial_r K(r,\rho)}{K(r,\rho)}
    \frac{\psi(z)}{z\psi'(z)}.
  \]
  Let $g(z) := z+\lambda(z)$ so that $\psi(z) = zg(z)$ and $\psi'(z) = g(z) +
  zg'(z)$. By \eqref{eq:mv}, $g'(z)>0$, so
  \[
    \frac{z\psi'(z)}{\psi(z)}=1+\frac{zg'(z)}{g(z)}>1,
    \qquad z>0
  \]
  Therefore, for $r>1$,
  \[
    \frac{H_\rho'(r)}{H_\rho(r)}
    >
    \frac{d_\rho'(r)}{d_\rho(r)}+\frac{\partial_r K(r,\rho)}{K(r,\rho)}
    =
    \frac
    {r(1+\rho r)(r+\rho)^2+\rho(1-r^2)d_\rho^2(r)}
    {r(1+\rho r)(r+\rho)d_\rho^2(r)}.
  \]
  The last expression is strictly positive because its denominator is positive and
  its numerator simplifies to $r^3+\rho(2+\rho^2)r^2+3\rho^2r+\rho>0$. Hence we have
  shown that $H_\rho'(r) > 0$ for $r>0$, that is, $H_\rho$ is strictly increasing for
  every $\rho\in(0,1)$.

  Fix $\rho\in(0,1)$. As $r\downarrow0$, $K(r,\rho)\to0$ and $d_\rho(r)\to1$, so
  \(
    \lim_{r\downarrow 0} H_\rho(r) = 0.
  \)
  As $r\to\infty$, $K(r,\rho)\sim(1-\rho^2)/(\rho r)$ and $d_\rho(r)\sim r$.
  Moreover, since $\psi(z)/z\to\lambda(0)=\sqrt{2/\pi}$ as $z\downarrow0$, we have
  $\psi^{-1}(z)\sim z/\sqrt{2/\pi}$ as $z\downarrow 0$. Consequently,
  \[
    \lim_{r\to\infty} H_\rho(r)
    =
    \lim_{r\to\infty} d_\rho(r)\psi^{-1}(K(r,\rho))
    =
    \frac{1-\rho^2}{\rho\sqrt{2/\pi}} > 0.
  \]
  Consequently, $\Phi(H_\rho(r_\alpha))$ is strictly increasing in $\alpha$, with 
  \[
    \lim_{\alpha\downarrow0} \Phi(H_\rho(r_\alpha)) = \Phi(0) = \frac12,
    \qquad
    \lim_{\alpha\to\infty} \Phi(H_\rho(r_\alpha)) 
    =
    \Phi\left(\frac{1-\rho^2}{\rho\sqrt{2/\pi}}\right) =: \bar p_\rho > \frac12.
  \]
  If $p(0) \leq 1/2$, then $p(0) < \Phi(H_\rho(r_\alpha))$ for $\alpha>0$;
  by \eqref{eq:ms}, $m'(\alpha)<0$ throughout, so $m(\alpha)$ is strictly decreasing.
  If $p(0) \geq \bar p_\rho$, then $p(0) > \Phi(H_\rho(r_\alpha))$ for $\alpha>0$;
  by \eqref{eq:ms}, $m'(\alpha)>0$ throughout, so $m(\alpha)$ is strictly increasing.
  If $1/2<p(0)<\bar p_\rho$, there is a unique $\alpha_\rho^*>0$ satisfying
  $\Phi(H_\rho(r_{\alpha_\rho^*})) = p(0)$. In particular, 
  \begin{equation}\label{eq:a-rho}
    \alpha_\rho^* = \frac{\sigma}{\sigma_\eps} H_\rho^{-1}\left(\Phi^{-1}(p(0))\right).
  \end{equation}
  By \eqref{eq:ms},
  $m'(\alpha) >0$ for $0< \alpha < \alpha^*_\rho$,
  $m'(\alpha) =0$ for $\alpha = \alpha^*_\rho$, and
  $m'(\alpha) <0$ for $\alpha > \alpha^*_\rho$. 
  Thus, in this case, $m(\alpha)$ is single-peaked, attaining its unique maximum at
  $\alpha^*_\rho$. At the unique maximum,
  \[
    p(\alpha^*_\rho)
    =
    (\Phi\circ\psi^{-1})(K(r_{\alpha^*_\rho},\rho))
    \in(1/2,\,p^\dagger).
  \]
  This completes the proof of part~(ii) of the proposition.
\end{proof}

Parts~(i) and~(ii) of Proposition~\ref{prop:shape} may appear puzzling: for fixed
$p(0)\in(1/2,p^\dagger]$, $m(\alpha)$ is strictly decreasing for $\rho=0$ but
single-peaked for sufficiently small $\rho>0$ such that $p^\dagger<\bar p_\rho$.
This difference can be resolved by showing that $\alpha_\rho^*\to0$ as $\rho\downarrow0$.
To see this, for any fixed $\alpha>0$,
\[
  \lim_{\rho\downarrow0} H_\rho(r_\alpha)
  = 
  H_0(r_\alpha)
  =
  \sqrt{1+r_\alpha^2}\,\Phi^{-1}(p^\dagger)
  >
  \Phi^{-1}(p^\dagger)
  \geq
  \Phi^{-1}(p(0)).
\]
By \eqref{eq:a-rho}, $\Phi^{-1}(p(0))=H_\rho(r_{\alpha_\rho^*})$. 
Thus $H_\rho(r_\alpha)>H_\rho(r_{\alpha^*_\rho})$ for all sufficiently small
$\rho>0$. Since $H_\rho$ is strictly increasing, this means $0<\alpha_\rho^*<\alpha$
for all sufficiently small $\rho>0$. As $\alpha>0$ is arbitrary, we must have
$\alpha_\rho^*\to0$ as $\rho\downarrow0$.
Joint continuity of the Gaussian selected-mean formula in
$(\alpha,\rho)$ then gives $m(\alpha_\rho^*)-m(0)\to0$.
Thus the initial increase disappears in the independence limit.

\subsubsection{Selected mean under negative correlation}

Proposition~\ref{prop:gcs} applies to every $\rho\in(-1,1)$, but the global shape
classification in Proposition~\ref{prop:shape} requires $\rho\geq0$. For $-1<\rho<0$,
we show that the selected mean initially rises when $p(0)<1/2$ and give a
separate example with a local minimum followed by a local maximum.

\paragraph{Initial response.}
By continuity, \eqref{eq:sgd} extends to the right derivative at zero. Since
$s(0)=\sigma$, $B_0=0$, and $C_0=\rho$, it gives
\(
  m'(0+)=\sigeps\rho\lambda(\tau_0)\psi(\tau_0),
\)
implying that
\(
  \sgn(m'(0+)) = -\sgn(\psi(\tau_0))
\)
given $\rho<0$.
By Lemma~\ref{lem:s-psi} and $\psi(0)=0$, we have $\sgn(\psi(\tau_0)) =
\sgn(\tau_0)$. Furthermore, $\tau_0=\Phi^{-1}(p(0))$ implies that
$\sgn(\tau_0)= \sgn(p(0)-1/2)$. Consequently,
\[
  \sgn(m'(0+)) = -\sgn(p(0)-1/2).
\]
This means that $m'(0+)>0$ when $p(0)<1/2$ and $m'(0+)<0$ when $p(0)>1/2$.
Continuity of the derivative therefore implies strict increase or decrease,
respectively, for sufficiently small positive noise scales. If $p(0)=1/2$,
then $p(\alpha)=1/2$ throughout and
$m(\alpha)=\mu+\sigma q(\alpha)\lambda(0)$ is strictly decreasing.

The initial rise in the selected mean when $p(0)<1/2$ operates through the selection
rate. Negative correlation implies $s'(0)=\rho\sigeps<0$, so increasing the noise
scale initially reduces the assessment standard deviation. When $p(0)<1/2$,
\eqref{eq:p'} gives $p'(0)<0$. Since $q(0)=1$ and $q'(0)=0$, \eqref{eq:gsm} implies
that the first-order increase in the selected mean comes entirely from this decline
in the selection rate. The value--assessment correlation $q(\alpha)$ nevertheless
strictly decreases with noise, exerting a countervailing effect for every
$\alpha>0$.

\paragraph{Three noise regions.}
The assessment standard deviation reaches its minimum at $r_\alpha=-\rho$,
whereas the value--assessment correlation crosses zero at
$r_\alpha=-1/\rho$. Since $0<-\rho<-1/\rho$, these points define three cases: 
\begin{enumerate}
  \item For $0<r_\alpha<-\rho$, $s'(\alpha)<0$ and $q(\alpha)>0$.
  \item For $-\rho<r_\alpha<-1/\rho$, $s'(\alpha)>0$ and $q(\alpha)>0$.
  \item For $r_\alpha>-1/\rho$, $s'(\alpha)>0$ and $q(\alpha)<0$.
\end{enumerate}
In Case 2, $C_\alpha>0$, so $m'(\alpha)>0$ if and only if condition 
\eqref{eq:grc} holds. If $r_\alpha$ equals $-\rho$ or $-1/\rho$, $C_\alpha=0$ and
$B_\alpha>0$, so \eqref{eq:sgd} implies $m'(\alpha)<0$.

In Cases 1 and 3, $C_\alpha<0$, and \eqref{eq:sgd} implies that
$m'(\alpha)>0$ if and only if $\psi(\tau_\alpha)<K(r_\alpha,\rho)$.
Since $\psi(\RR)=(-1,\infty)$, $m'(\alpha)<0$ if $K(r_\alpha,\rho)\leq-1$.
If $-1<K(r_\alpha,\rho)<0$, then $m'(\alpha)>0$ if and only if
\[
  p(\alpha)<\Phi\!\left(\psi^{-1}\!\bigl(K(r_\alpha,\rho)\bigr)\right).
\]
In Case 1, a falling selection rate raises the selected mean because $q(\alpha)>0$;
in Case 3, a rising selection rate raises the selected mean because $q(\alpha)<0$.
In either case, this effect must outweigh the decline in $q(\alpha)$.

\paragraph{A local minimum followed by a local maximum.}
To show why Proposition~\ref{prop:shape} does not extend to negative correlation,
take $\rho=-0.3$ and $p(0)=0.97$, so $\tau_0=\Phi^{-1}(0.97)\in(1.88,1.89)$.
The derivative has the sign of
\[
  F(r):=(1+\rho r)(r+\rho)
    \psi\!\left(\frac{\tau_0}{\sqrt{1+2\rho r+r^2}}\right)
    -r(1-\rho^2).
\]
Here
\[
  \begin{aligned}
    F(0.3)&=-0.273,\\
    F(1)&=0.49\psi\!\left(\frac{\tau_0}{\sqrt{1.4}}\right)-0.91
      >0.35\tau_0^2-0.91>0,\\
    F(3)&=0.27\psi\!\left(\frac{\tau_0}{\sqrt{8.2}}\right)-2.73<-2.19.
  \end{aligned}
\]
The bounds use $\psi(z)>z^2$ for $z>0$ and $\psi(z)<2$ for $0<z<1$;
the latter follows from $0<\lambda(z)<\lambda(0)=\sqrt{2/\pi}<1$.
Continuity therefore gives at least one local minimum for $r\in(0.3,1)$ and
at least one local maximum for $r\in(1,3)$. Consequently, the shape
classification in Proposition~\ref{prop:shape} requires additional cases under negative
correlation.

\section{Subgroup ATT in treatment selection}\label{sec:s-subgroup-att}

This section formalizes the application in Section~\ref{sec:subgroup-att} of the
main paper by comparing two subgroups with different levels of decision noise.
For a representative member of group $g\in\{1,2\}$, let $V_g=Y_g(1)-Y_g(0)$
be the individual treatment gain and let $\eps_g$ be the baseline assessment error.
To isolate the effect of noise on participant composition, suppose
$(V_1,\eps_1)$ and $(V_2,\eps_2)$ have the same joint distribution, with
$\E|V_g|<\infty$. The groups therefore share the population gain distribution
and baseline value--error dependence.

Decision noise in group $g$ is $\alpha_g\eps_g$, with
$0\leq\alpha_1<\alpha_2$, so group 2 has the larger error scale.
Treatment selection follows
\begin{equation}\label{eq:s-group-rule}
  D_g=1\{V_g+\alpha_g\eps_g>\eta_g\},
  \qquad g\in\{1,2\}.
\end{equation}
In the fixed-rate setting, independent randomized tie-breaking is allowed as
specified in the main paper. Define
\begin{equation}\label{eq:s-group-att}
  p_g:=\PP(D_g=1)>0,
  \qquad
  \mathrm{ATT}_g:=\E(V_g\mid D_g=1),
  \qquad
  A_g:=\E(V_gD_g)=p_g\mathrm{ATT}_g.
\end{equation}
\begin{samepage}
Here $A_g$ is the average outcome gain per group member relative to universal
nonparticipation. Comparisons are per group member and do not depend on group sizes.
\par
\end{samepage}

\subsection{Equal subgroup selection rates}

We first consider the case where the selections are at a fixed rate so that $p_1 =
p_2$, while the thresholds may differ across groups. The following corollary shows
that identical subgroup treatment rates can coexist with different ATTs generated
solely by different noise scales.

\begin{corollary}[Equal subgroup rates]\label{cor:gfr}
  Under the subgroup setup above, let $0\leq \alpha_1 < \alpha_2$, and suppose the
  thresholds and any tie-breaking probabilities satisfy $p_1=p_2=p\in(0,1)$. Then the
  distribution of $V_1$ conditional on $D_1=1$ first-order stochastically dominates
  that of $V_2$ conditional on $D_2=1$, and
  \[
    \mathrm{ATT}_2\leq\mathrm{ATT}_1,
    \qquad A_2\leq A_1.
  \]
  If the common distribution of $(V_1,\eps_1)$ and $(V_2,\eps_2)$ satisfies
  Assumption~\ref{ass:normal}, both inequalities are strict.
\end{corollary}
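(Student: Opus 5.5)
The plan is to reduce Corollary~\ref{cor:gfr} to Proposition~\ref{prop:fosd} and Corollary~\ref{cor:s-g-fr} by placing both subgroups on a common probability space. Since $(V_1,\eps_1)$ and $(V_2,\eps_2)$ share a joint distribution, let $(V,\eps)$ have that distribution, with $U$ independent uniform. The conditional law of $V_g$ given $D_g=1$ depends only on the joint law of $(V_g,\eps_g,U_g)$, the threshold $\eta_g$, and the tie-breaking probability $\gamma_g$. It therefore coincides with the law of $V$ given $D_{\alpha_g}=1$ in the fixed-rate setting with rate $p$. Here $D_{\alpha_g}$ is defined by \eqref{eq:Dj} with $\eta(\alpha_g)=\eta_g$ and $\gamma_{\alpha_g}=\gamma_g$; the equality $p_1=p_2=p$ is exactly the fixed-rate constraint. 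The one point to check is that the given thresholds and tie-breaking probabilities are admissible choices in \eqref{eq:Dj}. The proof of Proposition~\ref{prop:fosd} never uses uniqueness of $(\eta(\alpha),\gamma_\alpha)$; it uses only the form \eqref{eq:piv} and $\PP(D_{\alpha_j}=1)=p$. That form, together with the crossing point $v_c$ (or $v_c=\eta(0)$ when $\alpha_1=0$), holds for any such pair.

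With this identification, Proposition~\ref{prop:fosd} gives the first-order stochastic dominance directly. Because $\E|V_g|<\infty$, it also gives $\mathrm{ATT}_1=m(\alpha_1)\geq m(\alpha_2)=\mathrm{ATT}_2$. Multiplying by the common rate yields $A_2=p\,\mathrm{ATT}_2\leq p\,\mathrm{ATT}_1=A_1$.

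For strictness under Assumption~\ref{ass:normal}, there are no atoms, so tie-breaking is irrelevant and $\eta_g$ is pinned down by \eqref{eq:eta-alp}. Corollary~\ref{cor:s-g-fr} gives $m'(\alpha)<0$ for every $\alpha>0$. With the fixed-rate formula $m(\alpha)=\mu+\sigma q(\alpha)\lambda(\Phi^{-1}(p))$ valid on all of $[0,\infty)$ (since $q(0)=1$ and \eqref{eq:gs-mean-zero} holds), $m$ is continuous on $[0,\infty)$ and strictly decreasing on $(0,\infty)$, hence strictly decreasing on $[0,\infty)$. Thus $m(\alpha_1)>m(\alpha_2)$ even when $\alpha_1=0$. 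Equivalently, one can use $q$: it is strictly decreasing on $[0,\infty)$ by \eqref{eq:q-der}, and $\lambda(\Phi^{-1}(p))>0$. Strictness of $A_2<A_1$ then follows from $p>0$.

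The main obstacle is the bookkeeping in the first step: confirming that the subgroup thresholds, possibly different from each other and arbitrary subject only to the rate constraint, fit the fixed-rate framework of Proposition~\ref{prop:fosd}. The Gaussian strictness step is routine once the monotonicity of $q$ is noted.
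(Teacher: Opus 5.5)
Your proposal is correct and follows the paper's proof. You evaluate both groups on a common $(V,\eps)$, apply Proposition~\ref{prop:fosd} for the dominance and ATT ordering, multiply by the common rate $p$ to order the $A_g$, and get strictness under joint normality from $\mathrm{ATT}_2-\mathrm{ATT}_1=\sigma\lambda(\Phi^{-1}(p))[q(\alpha_2)-q(\alpha_1)]<0$. Your added check that the proof of Proposition~\ref{prop:fosd} holds for any admissible threshold and tie-breaking pair, not only a canonical one, makes explicit a detail the paper leaves implicit.
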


\begin{proof}[Proof of Corollary~\ref{cor:gfr}]
  Because both groups have the same joint distribution, their selection rules
  can be evaluated on a common generic pair $(V,\eps)$. Proposition~\ref{prop:fosd}
  then gives the distributional ordering and the ATT inequality.
  The aggregate-value inequality follows from $A_g=p\mathrm{ATT}_g$.
  Under joint normality, \eqref{eq:gs-mean} gives
  \(
    \mathrm{ATT}_2-\mathrm{ATT}_1
    =\sigma\lambda\bigl(\Phi^{-1}(p)\bigr)
      [q(\alpha_2)-q(\alpha_1)]<0,
  \)
  since $\lambda>0$ and $q$ is strictly decreasing on $[0,\infty)$.
  This proves the corollary.
\end{proof}

\subsection{A common fixed threshold}

Now let $\eta_1=\eta_2=\eta$ and use the strict threshold rule in
\eqref{eq:s-group-rule}. The selection rates may differ. Define
\begin{equation}\label{eq:s-group-value}
  W_g
  := \E[(V_g-\eta)D_g] 
  =  p_g(\mathrm{ATT}_g-\eta),
  \qquad g\in\{1,2\}.
\end{equation}
If $\eta$ is a participation cost excluded from $V_g$, then $W_g$ is average net
surplus per group member.

\begin{corollary}[A common subgroup threshold]\label{cor:gft}
  Under the subgroup setup above, let $0\leq \alpha_1<\alpha_2$, suppose both groups
  face a common threshold $\eta$, and assume $p_1, p_2>0$. Then, $W_2\leq W_1$, and
  \begin{equation}\label{eq:gac}
    \mathrm{ATT}_2>\mathrm{ATT}_1
    \qquad\text{if and only if}\qquad
    (\mathrm{ATT}_1-\eta)(p_1-p_2)>W_1-W_2.
  \end{equation}
  If the common distribution of $(V_1,\eps_1)$ and $(V_2,\eps_2)$ satisfies
  Assumption~\ref{ass:normal}, then $W_2<W_1$.
\end{corollary}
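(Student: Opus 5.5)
The plan is to reduce everything to the single-population results by evaluating both subgroup rules on a common generic pair $(V,\eps)$. This is legitimate because $(V_1,\eps_1)$ and $(V_2,\eps_2)$ have the same joint distribution and each of $W_g$, $p_g$, and $\mathrm{ATT}_g$ depends only on the law of $(V_g,\eps_g)$, the noise scale $\alpha_g$, and the common threshold $\eta$. Concretely, I will define $D_\alpha=1\{V+\alpha\eps>\eta\}$ on a single probability space carrying $(V,\eps)$ with the common law. Then $W_g=W(\alpha_g)$, $p_g=p(\alpha_g)$ and $\mathrm{ATT}_g=m(\alpha_g)$ in the notation of Section~\ref{sec:ft}. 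The integrability condition $\E|V-\eta|<\infty$ needed for Proposition~\ref{prop:ftv} follows from $\E|V_g|<\infty$.

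First, for $W_2\leq W_1$, I will apply \eqref{eq:threshold-loss-2} of Proposition~\ref{prop:ftv} with $\alpha_1<\alpha_2$. This gives $W(\alpha_1)-W(\alpha_2)=\E[|V-\eta|1\{D_{\alpha_1}=D_0,D_{\alpha_2}\neq D_0\}]\geq0$.

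Second, for the equivalence \eqref{eq:gac}, I will use the identity \eqref{eq:W-att}, namely $\mathrm{ATT}_g-\eta=W_g/p_g$, which is valid since $p_g>0$. Then
\[
  \mathrm{ATT}_2-\mathrm{ATT}_1=\frac{W_2}{p_2}-\frac{W_1}{p_1}
  =\frac{W_2-p_2(\mathrm{ATT}_1-\eta)}{p_2}
  =\frac{(\mathrm{ATT}_1-\eta)(p_1-p_2)-(W_1-W_2)}{p_2},
\]
where the last step substitutes $W_1=p_1(\mathrm{ATT}_1-\eta)$. Since $p_2>0$, the sign of the left side equals the sign of the numerator, which is exactly \eqref{eq:m21} restated in subgroup notation.

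Third, for the strict inequality under Assumption~\ref{ass:normal}, I will invoke the Gaussian strengthening in Section~\ref{sec:gft}. There, the event $E=\{V>\eta,\ \eps<0,\ \alpha_1<(V-\eta)/(-\eps)<\alpha_2\}$ has positive probability because the nondegenerate bivariate normal density is everywhere positive. This yields $W(\alpha_1)-W(\alpha_2)\geq\E[(V-\eta)1_E]>0$, and hence $W_2<W_1$.

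No step is a real obstacle. The only point needing care is the justification for transferring single-population results to the two subgroups, which rests on the observation that all quantities depend only on the common joint law.
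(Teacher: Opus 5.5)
Your proposal is correct and matches the paper's proof essentially step for step: you evaluate both rules on a common pair $(V,\eps)$ and apply Proposition~\ref{prop:ftv} for $W_2\leq W_1$, rearrange $W_g=p_g(\mathrm{ATT}_g-\eta)$ into $p_2(\mathrm{ATT}_2-\mathrm{ATT}_1)=(\mathrm{ATT}_1-\eta)(p_1-p_2)-(W_1-W_2)$, and invoke the positive-probability event $E$ from Section~\ref{sec:gft} for the strict inequality under joint normality. Your explicit check that $\E|V-\eta|<\infty$ follows from $\E|V_g|<\infty$ is a small detail the paper leaves implicit.
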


\begin{proof}[Proof of Corollary~\ref{cor:gft}]
  Evaluating both rules on the common distribution with the threshold fixed,
  Proposition~\ref{prop:ftv} gives $W_2\leq W_1$; the strict inequality under
  joint normality was established in Section~\ref{sec:gft}.
  Equation~\eqref{eq:s-group-value} yields
  \(
    p_2(\mathrm{ATT}_2-\mathrm{ATT}_1) 
    =
    (\mathrm{ATT}_1-\eta)(p_1-p_2)-(W_1-W_2).
  \)
  Since $p_2>0$, this proves \eqref{eq:gac}.
\end{proof}

\paragraph{Gaussian comparisons.}
Under Assumption~\ref{ass:normal} and a common threshold,
$\mathrm{ATT}_g=m(\alpha_g)$, and both groups share the noise-free selection rate
$p(0)=\Phi((\mu-\eta)/\sigma)$. Two cases illustrate how higher ATT can accompany
lower net surplus in the noisier group when the threshold is a participation cost.

Under independent errors with $p(0)>p^\dagger$, Proposition~\ref{prop:shape}(i)
gives a unique maximizing scale $\alpha^*$, defined in \eqref{eq:a*}.
For $0\leq\alpha_1<\alpha_2\leq\alpha^*$,
\[
  \mathrm{ATT}_2>\mathrm{ATT}_1,
  \qquad W_2<W_1.
\]
If both scales lie after the maximum, the ATT ordering reverses. If they straddle
it, their order alone does not determine the ATT ranking; \eqref{eq:gac} gives
the exact comparison.

Positive correlation can make this reversal hold for every pair of noise scales:
if $\rho>0$ and $p(0)\geq\bar p_\rho$, Proposition~\ref{prop:shape}(ii) gives
$\mathrm{ATT}_2>\mathrm{ATT}_1$ and $W_2<W_1$ for every
$0\leq\alpha_1<\alpha_2$. For $\rho<0$, Proposition~\ref{prop:gcs} determines the
local response but does not give a global ATT ordering.

These are differences in true average gains among participants, despite identical
gain distributions in the underlying subgroup populations. An ATT gap alone
identifies neither a difference in population gain distributions nor a difference
in decision noise. The assumption of a common distribution isolates the latter mechanism;
empirical subgroup comparisons may also reflect differences in gains, baseline
errors, or thresholds.

\end{document}